\documentclass[11pt]{amsart}
\usepackage[foot]{amsaddr}
\usepackage{microtype}
\usepackage{url}
\usepackage{amsthm,amsfonts,amscd,amsmath,amssymb}
\usepackage[foot]{amsaddr}
\usepackage[margin=1in]{geometry}
\usepackage[english]{babel}
\usepackage{mathtools}
\usepackage{graphicx}
\usepackage{bbm}
\usepackage{libertine}

\usepackage{pgfplots}	
\usepgfplotslibrary{fillbetween} 
\pgfplotsset{compat=newest}
\pgfdeclarelayer{background}
\pgfdeclarelayer{foreground}
\pgfsetlayers{background,main,foreground}

\setlength{\parindent}{1em}
\setlength{\parskip}{.5em}

\newcommand{\poly}{\mathrm{poly}}
\newcommand{\KS}{\mathrm{KS}}
\newcommand{\e}{\mathrm{e}}
\newcommand{\pnd}{\mathcal{P}_{n,d}}
\newcommand{\dz}{\mathrm{d}z}


\newcommand{\J}{\mathbb{J}}

\newcommand{\R}{\mathbb{R}}

\newcommand{\1}{\mathbbm{1}}
\DeclareMathOperator{\tr}{tr}
\newcommand{\innerprod}[2]{\big\langle #1 , #2 \big\rangle}
\newcommand{\outerprod}[2]{{#1}\otimes{#2}}

\newcommand{\expected}{\mathbb{E}}

\newcommand{\dee}{\,\textup{d}}


\newcommand{\ER}{Erd\H{o}s-R\'{e}nyi\,}
\newcommand{\bm}{\boldsymbol}

\theoremstyle{plain}
	
	\newtheorem{corollary}{Corollary}
	
	\newtheorem{lemma}{Lemma}
	\newtheorem{theorem}{Theorem}
	
\theoremstyle{definition}





\newcommand{\calE}{\mathcal{E}}

\newcommand{\lovasz}{Lov\'{a}sz\,}
\newcommand{\pseudo}{\tilde \expected}

\newcommand{\zee}{\mathfrak{z}}
\DeclareMathOperator{\diag}{diag}
\newcommand{\eps}{\epsilon}

\newcommand{\pbool}{p^{\textup{bool}}}
\newcommand{\psing}{p^{\textup{sing}}}
\newcommand{\pcol}{p^{\textup{col}}}
\newcommand{\pcut}{p^{\textup{cut}}}
\newcommand{\gcut}{g_{\textup{cut}}}

{\begin{array}{l@{\hspace{8mm}}l@{\hspace{8mm}}l}}%
{\end{array}}
\newlength{\lplb}
\setlength{\lplb}{3mm}


\begin{document}
\title[The Lov\'asz Theta Function for Random Regular Graphs and Community Detection in the Hard Regime]{The Lov\'asz Theta Function for Random Regular Graphs \\ and Community Detection in the Hard Regime}
\author{Jess Banks}
	\thanks{Dept. of Mathematics, University of California-Berkeley, Berkeley CA}
\author{Robert Kleinberg}
	\thanks{Dept. of Computer Science, Cornell University, Ithaca NY}
\author{Cristopher Moore}
	\thanks{Santa Fe Institute, Santa Fe NM}
	
\maketitle

\begin{abstract}
We derive upper and lower bounds on the degree $d$ for which the Lov\'asz $\vartheta$ function, or equivalently sum-of-squares proofs with degree two, can refute the existence of a $k$-coloring in random regular graphs $G_{n,d}$.  We show that this type of refutation fails well above the $k$-colorability transition, and in particular everywhere below the Kesten-Stigum threshold.  This is consistent with the conjecture that refuting $k$-colorability, or distinguishing $G_{n,d}$ from the planted coloring model, is hard in this region.  Our results also apply to the disassortative case of the stochastic block model, adding evidence to the conjecture that there is a regime where community detection is computationally hard even though it is information-theoretically possible.  Using orthogonal polynomials, we also provide explicit upper bounds on $\vartheta(\overline{G})$ for regular graphs of a given girth, which may be of independent interest.
\end{abstract}


\section{Introduction}

Many constraint satisfaction problems have \emph{phase transitions} in the random case: as the ratio between the number of constraints and the number of variables increases, there is a critical value at which the probability that a solution exists, in the limit $n \to \infty$, suddenly drops from one to zero.  Above this transition, most instances are too constrained and hence unsatisfiable.  But how many constraints do we need before it becomes easy to \emph{prove} that a typical instance is unsatisfiable?  When is there likely to be a short refutation, which we can find in polynomial time, proving that no solution exists?

For a closely related problem, suppose that a constraint satisfaction problem is generated randomly, but with a particular solution ``planted'' in it.  Given the instance, can we recover the planted solution, at least approximately?  For that matter, can we tell whether the instance was generated from this planted model, as opposed to an un-planted model with no built-in solution?  We can think of this as a statistical inference problem.   If there is an underlying pattern in a dataset (the planted solution) but also some noise (the probabilistic process by which the instance is generated) the question is how much data (how many constraints) we need before we can find the pattern, or confirm that one exists. 

Here we focus on the $k$-colorability of random graphs, and more generally the community detection problem. Let $G=G(n,p=d/n)$ denote the \ER graph with $n$ vertices and average degree $d$. A simple first moment argument shows that with high probability $G$ is is not $k$-colorable if 
\begin{equation}
\label{eq:kcol-first}
d \ge d_{\mathrm{first}} 
= 2 k \ln k - \ln k \, . 
\end{equation}
(We say that an event $E_n$ on graphs of size $n$ holds with high probability if $\lim_{n \to \infty} \Pr[E_n] = 1$, and with positive probability if $\liminf_{n \to \infty} \Pr[E_n] > 0$.)  
Sophisticated uses of the second moment method~\cite{achlioptas-naor,coja-oghlan-vilenchik} show that this is essentially tight, and that the $k$-colorability transition occurs at 
\[
d_c = d_{\mathrm{first}} - O(1) \, . 
\]
Now consider the planted coloring model, where we choose a coloring $\sigma$ uniformly at random and condition $G$ on the event that $\sigma$ is proper.  
If $d > d_c$, then $G(n,d/n)$ is probably not $k$-colorable, while graphs drawn from the planted model are $k$-colorable by construction.  Thus, above the $k$-colorability transition, we can tell with high probability whether $G$ was drawn from the planted or un-planted model by checking to see if $G$ is $k$-colorable.  However, searching exhaustively for $k$-colorings would take exponential time.  

A similar situation holds for the stochastic block model, a model of graphs with community structure also known as the planted partition problem (see~\cite{moore-eatcs,abbe-survey} for reviews).  For our purposes, we will define it as follows: fix a constant $\tau$, and say a partition $\sigma$ of the vertices into $k$ groups is ``good'' if a fraction $\tau/k$ of the edges connect vertices within groups.  Equivalently, if $G$ has $m$ edges, $\sigma$ is a multiway cut with $(1-\tau/k)m$ edges crossing between groups.  Generalizing the planted coloring model where $\tau=0$, the block model chooses $\sigma$ uniformly, and conditions $G$ on the event that $\sigma$ is good.  The cases $\tau > 1$ and $\tau < 1$, where vertices are more or less likely to be connected to others in the same group, are called \emph{assortative} (or \emph{ferromagnetic}) and \emph{disassortative} (or \emph{antiferromagnetic}) respectively.  

Two natural problems related to the block model are \emph{detection}, i.e., telling with high probability whether $G$ was drawn from the block model or from $G(n,d/n)$, and \emph{reconstruction}, finding a partition which is significantly correlated with the planted partition $\sigma$.  (This is sometimes called \emph{weak} reconstruction to distinguish it from finding $\sigma$ exactly, which becomes possible when $d=\Theta(\log n)$~\cite{BC09,abbe-bandeira-hall,abbe-sandon,HajekWuXuSDP14,HajekWuXuSDP15,agarwal-etal,mns-consistency}.)  Both problems become information-theoretically possible at a point called the condensation transition~\cite{krzakala-etal-gibbs,coja-etal-cavity,coja-new}, and the first and second moment methods~\cite{banks-etal-colt} show that this scales as
\begin{equation}
\label{eq:block-model-info}
d_c \sim \frac{k \log k}{(\tau-1)^2} \, ,
\end{equation}
where $\sim$ hides a multiplicative constant. As in $k$-coloring this is roughly the first-moment bound above which, with high probability, no good partitions exist in $G(n,d/n)$.  However, the obvious algorithms for detection and reconstruction, such as searching exhaustively for good partitions or sampling from an appropriate Gibbs distribution~\cite{abbe-sandon-isit-2016,abbe-sandon-more-groups-arxiv}, require exponential time.  

In fact, conjectures from statistical physics~\cite{KrzakalaZdeborova09,Decelle2011,Decelle2011a} suggest this exponential difficulty is sometimes unavoidable.  Specifically, these conjectures state that polynomial-time algorithms for detection and reconstruction exist if and only if $d$ is above the \emph{Kesten-Stigum threshold}~\cite{KestenStigum66,KestenStigum66b}, 
\begin{equation}
\label{eq:block-model-ks}
d_\KS = \left( \frac{k-1}{\tau - 1} \right)^2 \, . 
\end{equation}
Several polynomial-time algorithms are now known to succeed whenever $d > d_\KS$, including variants of belief propagation~\cite{mns-colt,abbe-sandon-more-groups} and spectral algorithms based on non-backtracking walks~\cite{mns-proof,non-backtracking,massoulie2014,bordenave-lelarge-massoulie}.  Moreover, for $k=2$ we know that the information-theoretic and Kesten-Stigum thresholds coincide~\cite{mossel-neeman-sly-impossible}.  Comparing~\eqref{eq:block-model-info} and~\eqref{eq:block-model-ks} we see that for any $\tau \ne 1$ we have $d_c < d_\KS$ for sufficiently large $k$, and in fact this occurs for some $\tau < 1$ when $k=4$ and more generally when $k \ge 5$~\cite{abbe-sandon-isit-2016,abbe-sandon-more-groups-arxiv,banks-etal-colt}.  

Thus in the regime $d_c < d < d_\KS$, detection and reconstruction are information-theoretically possible, but are conjectured to be computationally hard.  In particular, this conjecture implies that there is no way to refute the existence of a coloring, or of a good partition, whenever $d < d_\KS$, even when $d$ is large enough so that a coloring or partition probably does not exist.  Our goal in this paper is to rule out spectral refutations based on the \lovasz theta function, or equivalently sum-of-squares proofs of degree two.

For technical reasons, we focus on random $d$-regular graphs, which we denote $G_{n,d}$.  A series of papers applying the first and second moment methods in this setting~\cite{molloy-reed,achlioptas-moore-reg,kemkes-peres-gimenez-wormald,coja-oghlan-efthymiou-hetterich} have determined the likely chromatic number of $G_{n,d}$ for almost all $d$, showing that the critical $d$ for $k$-colorability is $d_c = d_{\mathrm{first}}-O(1)$ just as for $G(n,d/n)$.  (There are a few values of $d$ and $k$ where $G_{n,d}$ could be $k$-colorable with probability strictly between $0$ and $1$, so this transition might not be completely sharp.)  

We define the $d$-regular block model by choosing a planted partition $\sigma$ uniformly at random and conditioning $G_{n,d}$ on the event that $\sigma$ is good.  Equivalently, we choose $G$ uniformly from all $d$-regular graphs such that a fraction $\tau/k$ of their $m=dn/2$ edges connect vertices within groups.  We claim that our results also apply to the regular block model proposed in~\cite{mossel-neeman-sly-impossible} where $d$-regular graphs are chosen with probability proportional to $\tau^{\textrm{\# within-group edges}} ((k-\tau)/(k-1))^{\textrm{\# between-group edges}}$: in that case, the fraction of within-group edges fluctuates, but is $\tau/k + o(1)$ with high probability.\footnote{These models are not to be confused with a stricter model, where for some constants $q_{rs}$ each vertex in group $r$ has exactly $q_{rs}$ neighbors in group $s$~\cite{brito-etal-equitable,coja-mossel-vilenchik,newman-equitable,barucca}.  Our model only constrains the total number of edges within or between groups.}  We again conjecture that refuting the existence of a coloring or a good partition is exponentially hard below the Kesten-Stigum bound.  Since the branching ratio of a $d$-regular tree is $d-1$, in the regular case this becomes
\[
d < d_\KS = \left( \frac{k-1}{\tau - 1} \right)^2 + 1 \, . 
\]

\textbf{Main results.}  The \lovasz $\vartheta$ function, which we review below, gives a lower bound on the chromatic number which can be computed in polynomial time.  In particular, if $\vartheta(\overline{G}) > k$, this provides a polynomial-time refutation of $G$'s $k$-colorability.  We first prove that this type of refutation exactly corresponds to sum-of-squares proofs of degree two in a natural encoding of $k$-colorability as a system of polynomials; this is intuitive, but it does not seem to have appeared in the literature.  We then show the following bounds on the likely value of $\vartheta(\overline{G})$ when $G$ is a random $d$-regular graph.
\begin{theorem}
\label{thm:main-col}
Let $d$ be constant.  For any constant $\eps > 0$, with high probability
	\[ 
\frac{d}{2\sqrt{d-1}} + 1 - \eps
\le \vartheta(\overline{G_{n,d}}) 
\le \frac{d}{2\sqrt{d-1}} + 2 + \eps \, . 
	\]
As a consequence, the \lovasz $\vartheta$ function cannot refute $k$-colorability with high probability if 
\begin{equation}
	\label{eq:lovasz-works} 
	k > 2 + \frac{d}{2\sqrt{d-1}} \, , 
\end{equation}
and in particular if $d$ is below the Kesten-Stigum threshold.
\end{theorem}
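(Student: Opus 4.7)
I would handle the two bounds separately, both exploiting two standard facts about $G_{n,d}$: (i) Friedman's theorem, which places the non-trivial spectrum of the adjacency matrix $A(G_{n,d})$ inside $[-2\sqrt{d-1}, 2\sqrt{d-1}] + o(1)$ w.h.p.; and (ii) that the girth of $G_{n,d}$ exceeds any fixed constant w.h.p.

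\emph{Lower bound.} I would use the primal SDP formulation
\[
\vartheta(\overline{G}) = \max\bigl\{ \tr(JB) : B \succeq 0,\ \tr B = 1,\ B_{ij} = 0 \text{ when } i \neq j \text{ and } ij \notin E(G) \bigr\}
\]
and exhibit the explicit feasible matrix
\[
B = \frac{1}{n}\left( I - \frac{A}{\lambda_n(A)} \right),
\]
where $\lambda_n(A) < 0$ is the smallest eigenvalue of $A$ (negative w.h.p., since $G_{n,d}$ is non-bipartite w.h.p.). This $B$ is supported on the diagonal and edges of $G$, has trace $1$, and is PSD because its eigenvalues $\tfrac{1}{n}(1 - \lambda_i(A)/\lambda_n(A))$ are nonneg. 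A direct calculation gives $\tr(JB) = 1 + d/|\lambda_n(A)|$, and Friedman's bound $|\lambda_n(A)| \leq 2\sqrt{d-1} + o(1)$ w.h.p.\ supplies the claimed lower bound.

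\emph{Upper bound.} I would use the dual SDP formulation
\[
\vartheta(\overline{G}) = \min\bigl\{\lambda_{\max}(Y) : Y^\top = Y,\ Y_{ii} = 1,\ Y_{ij} = 1 \text{ for } ij \in E(G)\bigr\},
\]
with $Y_{ij}$ free on non-edges. Let $W_k$ denote the distance-$k$ adjacency matrix, and let $\{p_k\}$ be the orthogonal polynomials for the Kesten-McKay measure, defined by $p_0 = 1$, $p_1(\lambda) = \lambda$, $p_2(\lambda) = \lambda^2 - d$, and $p_{k+1} = \lambda p_k - (d-1) p_{k-1}$ for $k \geq 2$. On a $d$-regular graph whose girth exceeds $2K$, one has $p_k(A) = W_k$ for every $k \leq K$, so the ansatz
\[
Y = \sum_{k=0}^{K} c_k W_k \quad \text{with } c_0 = c_1 = 1
\]
is dual-feasible, and equals $q(A)$ for $q = \sum_k c_k p_k$. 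Hence $\lambda_{\max}(Y) = \max\bigl\{q(d),\, \max_{i\geq 2} q(\lambda_i(A))\bigr\}$, and by Friedman the bulk eigenvalues lie in $[-2\sqrt{d-1}, 2\sqrt{d-1}] + o(1)$. The substitution $\lambda = 2\sqrt{d-1}\cos\theta$ and rescaling $\tilde p_k = p_k/(d-1)^{k/2}$ converts the recurrence (for $k \geq 2$) into the Chebyshev recurrence $\tilde p_{k+1} = 2\cos\theta\,\tilde p_k - \tilde p_{k-1}$, whose general solution is an explicit trigonometric polynomial in $k\theta$. Using this closed form I would pick the $c_k$ via a Chebyshev-type minimax so that
\[
\sup_{|\lambda| \leq 2\sqrt{d-1}} q(\lambda) \leq \frac{d}{2\sqrt{d-1}} + 2 + O(1/K),
\]
while simultaneously keeping $q(d)$ below the same threshold. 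Letting $K \to \infty$ (allowed since the girth of $G_{n,d}$ grows) absorbs the $O(1/K)$ slack into $\eps$.

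\emph{Main obstacle.} The heart of the argument is the constrained extremal problem for $q$: the fixed coefficients $c_0 = c_1 = 1$ force $q$ to contain the affine piece $1 + \lambda$, and one must show that even the best Chebyshev-type cancellation by higher $p_k$ still leaves a sup-norm on $[-2\sqrt{d-1}, 2\sqrt{d-1}]$ of order $d/(2\sqrt{d-1}) + 2$, matching (up to the additive gap of $1$) the lower bound -- and that the large root at $\lambda = d$ does not spoil this. Producing the explicit extremizer and bounding its value both on the bulk interval and at the Perron eigenvalue $d$, uniformly in $K$, is where the explicit trigonometric representation of $\tilde p_k$ must do the real work.
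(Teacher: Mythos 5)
Your construction is the same as the paper's: the matrix $\frac{1}{n}\left(\1 + A/|\lambda_{\min}|\right)$ is a feasible point of the trace-normalized dual SDP with objective $1 + d/|\lambda_{\min}|$, and Friedman's theorem does the rest. (You call this the ``primal'' SDP and the paper calls it the ``dual''; this is only a naming convention.) No issues here.

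\textbf{Upper bound.} The core ideas — expanding a feasible matrix in the distance/non-backtracking basis $\{W_k\} = \{p_k(A)\}$, using the Chebyshev-type recurrence for the $p_k$, and optimizing the coefficients by an extremal/minimax argument on $[-2\sqrt{d-1}, 2\sqrt{d-1}]$ — are the same as the paper's. But there is a genuine gap at the foundation: your ``standard fact (ii)'', that the girth of $G_{n,d}$ exceeds any fixed constant with high probability, is \emph{false} for constant $d$. The number of $\ell$-cycles in $G_{n,d}$ converges to a Poisson with mean $(d-1)^\ell/(2\ell)$, so for any fixed $g$ the event $\{\mathrm{girth} > g\}$ has limiting probability $\prod_{\ell=3}^g \e^{-(d-1)^\ell/(2\ell)}$, a constant strictly between $0$ and $1$. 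Thus your ansatz $Y = \sum_{k \le K} c_k W_k$ is only dual-feasible with \emph{positive} probability, not with high probability, and ``letting $K \to \infty$'' is not available. The paper closes exactly this gap with a separate concentration argument (Lemma~\ref{lem:vartheta-concentrated}): if $\vartheta(\overline{G_{n,d}}) \le \theta$ with positive probability, then $\vartheta(\overline{G_{n,d}}) \le \theta + 1$ with high probability. That lemma is proved by a martingale/Lipschitz bound on the configuration model together with an explicit repair of the orthogonal representation on a sparse set of bad vertices, and it is responsible for the ``$+1$'' gap between the lower and upper bounds. Your proof plan needs an analogue of this lemma; without it, the upper bound holds only with positive probability.

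Two smaller remarks. First, the minimax value you state, $d/(2\sqrt{d-1}) + 2 + O(1/K)$, is not what the optimal extremal problem gives: the paper's construction (taking $f$ to be a normalized square $\zeta^{-1}\prod_j (z-r_j)^2$ built from the Gaussian-quadrature nodes of the Kesten--McKay orthogonal polynomials) achieves $1 + d/(2\sqrt{d-1}) + o_\gamma(1)$ for girth $\ge \gamma$; the extra $+1$ in the theorem then comes from the concentration lemma, not from slack in the polynomial optimization. Second, your worry about the Perron eigenvalue at $\lambda = d$ is legitimate but is resolved cleanly in the paper's formulation: since their $f$ is a perfect square it is nonnegative on all of $\R$, including at $d/(2\sqrt{d-1})$, so feasibility there is automatic. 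If you stick with the $\lambda_{\max}(Y)$ formulation you will want to verify this explicitly via the affine correspondence $Y = \kappa \1 - (\kappa-1)P$.
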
 
\noindent Rearranging, no refutation of this kind can exist when
$$
	d < 2 (k-2) \left( (k-2) + \sqrt{ (k-2)^2 - 1 } \right) = (4-o_k(1)) d_\KS \, . 
$$
Our lower bound on $\vartheta(\overline{G_{n,d}})$ follows easily from Friedman's theorem~\cite{friedman} on the spectrum of $G_{n,d}$.  For the upper bound, we first use orthogonal polynomials to derive explicit bounds on $\vartheta(\overline{G})$ for arbitrary regular graphs of a given girth---which may be of independent interest---and then employ a concentration argument for $G_{n,d}$.  

We also relate the \lovasz $\vartheta$ function to the existence of a good partition in the disassortative case of the block model, giving
\begin{theorem}
Fix $\tau < 1$ and say a partition is \emph{good} if a fraction $\tau/k$ of its edges connect endpoints in the same group.  Then sum-of-squares proofs of degree two cannot refute the existence of a good partition in $G_{n,d}$ if 
	\[
		\frac{k - \tau}{1 - \tau} > 2 + \frac{d}{2\sqrt{d-1}} \, . 
	\]
\end{theorem}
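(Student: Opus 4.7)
The plan is to adapt the paper's identification of SoS degree two with the Lov\'asz $\vartheta$ function (noted in the Main Results discussion) to the good-partition setting, where the effective number of colors becomes $k' := (k-\tau)/(1-\tau)$. Combined with Theorem~\ref{thm:main-col}, which gives $\vartheta(\overline{G_{n,d}}) \le 2 + d/(2\sqrt{d-1}) + \eps$ w.h.p., this will immediately yield the theorem: the hypothesis $k' > 2 + d/(2\sqrt{d-1})$ lets me pick $\eps$ small enough that $\vartheta(\overline{G_{n,d}}) \le k'$ w.h.p.

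First I encode ``$G$ admits a good partition at parameter $\tau$'' as the polynomial system with variables $x_{i,a}$ ($i \in [n]$, $a \in [k]$) and constraints $x_{i,a}^2 = x_{i,a}$, $x_{i,a}x_{i,b} = 0$ for $a \neq b$, $\sum_a x_{i,a} = 1$, and $\sum_{ij \in E}\sum_a x_{i,a}x_{j,a} = m\tau/k$; SoS degree two fails to refute iff a degree-two pseudo-expectation $\pseudo$ satisfying all of these exists. Since the constraints are $S_k$-invariant under relabelling groups, I symmetrise and assume $\pseudo[x_{i,a}] = 1/k$ and $\pseudo[x_{i,a}x_{j,b}]$ depends only on $(i,j)$ and on whether $a = b$. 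Writing $\lambda_{ij} := \sum_a \pseudo[x_{i,a}x_{j,a}]$ and decomposing $\R^{nk} = \R^n \otimes \R\mathbf{1} \;\oplus\; \R^n \otimes \mathbf{1}^\perp$, the $nk\times nk$ Gram matrix $M_{(i,a),(j,b)} := \pseudo[x_{i,a}x_{j,b}]$ splits block-diagonally across these summands: its cross-terms vanish by $\sum_a x_{i,a} = 1$; it acts as the automatically PSD matrix $(1/k)J_n$ on the trivial summand; and on the non-trivial summand it acts as $N \otimes I_{k-1}$ with $N_{ij} = (k\lambda_{ij}-1)/(k(k-1))$. Hence $M \succeq 0$ collapses to the single $n\times n$ condition $Z \succeq 0$ for $Z_{ij} := (k\lambda_{ij} - 1)/(k-1)$, which together with $\lambda_{ii} = 1$ makes $Z$ the Gram matrix of unit vectors $z_i$. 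Translating the edge constraint yields
\[
\sum_{ij \in E} \langle z_i, z_j \rangle \;=\; \frac{k\sum_{ij\in E}\lambda_{ij} - m}{k - 1} \;=\; \frac{m(\tau-1)}{k-1} \;=\; -\frac{m}{k' - 1},
\]
so SoS degree two is non-refuting iff such unit vectors exist.

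I next construct them whenever $\vartheta(\overline{G}) \le k'$. By the vector-chromatic-number reformulation of $\vartheta$, there exist unit vectors $z_i'$ with $\langle z_i', z_j'\rangle \le -1/(k'-1)$ on every edge, so $\sum_{ij\in E}\langle z_i', z_j'\rangle \le -m/(k'-1)$. To upgrade the inequality to equality I rotate toward the trivial solution in a direct sum: $w_i := (\cos\theta)(z_i',0) + (\sin\theta)(0,1)$ are unit vectors with $\sum_{ij \in E}\langle w_i, w_j\rangle = \cos^2\theta\sum_{ij\in E}\langle z_i',z_j'\rangle + m\sin^2\theta$, continuous in $\theta$ and ranging from a value $\le -m/(k'-1)$ at $\theta = 0$ up to $+m$ at $\theta = \pi/2$, and therefore equal to $-m/(k'-1)$ at some intermediate angle. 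Combining with the bound from Theorem~\ref{thm:main-col} completes the argument. The main technical obstacle is the block decomposition in the second paragraph: one must carefully verify that averaging a pseudo-expectation over $S_k$ preserves all axioms and constraints, and that with this symmetry imposed the PSD condition on the full $nk \times nk$ Gram matrix genuinely collapses to $Z \succeq 0$, with no residual constraints from cross-terms between the trivial and standard isotypic components of $\R^k$. Everything downstream is a clean translation between two SDPs that share the same underlying $n\times n$ correlation matrix.
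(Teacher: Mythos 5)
Your proposal is correct and follows the paper's own route: reduce degree-two non-refutability to a relaxed Lov\'asz-type SDP in which only the edge-averaged orthogonality constraint survives, note that this SDP is feasible whenever the genuine $\vartheta$ SDP is (your rotation argument makes this explicit; the paper simply observes $\hat\vartheta(\overline{G}) \le \vartheta(\overline{G})$ because the former's feasible region contains the latter's, and it normalizes by $\tau$ through a short auxiliary lemma rather than by passing directly to $k' = (k-\tau)/(1-\tau)$ as you do), and then appeal to the upper bound on $\vartheta(\overline{G_{n,d}})$ of Theorem~\ref{thm:main-col}. One small point to firm up in the writeup: the positivity axiom $\pseudo[p^2]\ge 0$ for affine $p$ makes the $(nk+1)\times(nk+1)$ moment matrix PSD, not merely the $nk\times nk$ block $M$; in the symmetrized case the two conditions happen to coincide because the Schur-complement correction $\outerprod{\bm\ell}{\bm\ell}$ is supported entirely on the trivial isotypic summand $\R^n\otimes\R\bm{1}$, where it exactly equals $M$'s restriction, so that both conditions collapse to $Z\succeq 0$ on the standard summand, but this should be said rather than left implicit.
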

\noindent 
Thus degree-two sum-of-squares cannot distinguish the regular stochastic block model from $G_{n,d}$ until $d$ is roughly a factor of $4$ above the Kesten-Stigum threshold.

\textbf{Related work.}
The distribution of $\vartheta(\overline{G})$ for the \ER\ graph $G = G(n,p)$ and the random $d$-regular graph $G = G_{n,d}$ were studied in~\cite{coja2003}.  In particular, that work showed that when $d$ is sufficiently large, with high probability $\vartheta(\overline{G_{n,d}}) > c \sqrt{d}$ for a constant $c > 0$.  Our results tighten this lower bound, making the constant $c$ explicit, and provide a nearly-matching upper bound.


Our results on the power of degree-two sum-of-squares refutations for $k$-colorability contribute to a recent line of work on refutations of random CSPs, which we briefly survey. If we define the density of a CSP as the ratio of constraints to variables---which for coloring equals half the average degree of the graph---then the conjectured hard regime for $k$-coloring corresponds to a range of densities bounded below and above by constants (i.e., depending on $k$ but not $n$).  For CSPs such as $k$-SAT and $k$-XOR, there is again a satisfiability transition at constant density, but with high probability sum-of-squares refutations with constant degree do not exist unless the density is much higher, namely $\Omega(n^{k/2 - 1})$~\cite{schoenebeck}, a result which was recently extended to general CSPs whose constraint predicate supports a $(k-1)$-wise uniform distribution~\cite{kmow}.  Conversely, if a predicate does not support a $t$-wise uniform distribution, then~\cite{allen-odonnell-witmer} shows that there is an efficient sum-of-squares refutation when the density is $\tilde{O}(n^{t/2}-1)$.  For coloring, this gives refutations at roughly constant density; our contribution makes this a nearly-precise constant in the special case of degree-two sum-of-squares on random regular graphs.


The hidden clique problem also has a conjectured hard regime.  It is well known that the random graph $G(n,1/2)$ has no cliques larger than $O(\log n)$~\cite{erdos47} but it is conjectured to be computationally hard to distinguish $G(n,1/2)$ from a graph with a planted clique of size $o(n^{1/2})$. A sequence of progressively stronger sum-of-squares lower bounds for this problem~\cite{deshpande15,hopkins2016integrality,meka2015sum} have culminated in the theorem that with high probability the degree-$d$ sum-of-squares proof system cannot refute the existence of a clique of size $n^{1/2 - c(d / \log n)^{1/2}}$ in $G(n,1/2)$ for some constant $c>0$~\cite{bhkkmp16}.

In contrast to the aforementioned work on refuting random $k$-CSPs and planted cliques, our result pertains to a much more specific pair of problems, namely $k$-coloring and the stochastic block model, and only to degree-two sum-of-squares refutations; but it attains a sharp bound, within an additive constant, on the density at which these refutations become possible.  We conjecture that sum-of-squares refutations of any constant degree do not exist below the Kesten-Stigum threshold, but it seems difficult to extend our current techniques to degree higher than two.



\section{Colorings, Partitions, and the \lovasz $\vartheta$ Function}

\subsection{Background on sum-of-squares}

One type of refutation which has gained a great deal of interest recently is sum-of-squares proofs: see~\cite{barak-steurer-review} for a review.  Suppose we encode our variables and constraints as a system of $m$ polynomial equations on $n$ variables, $f_j(x_1,x_2,\ldots,x_n)=0$ for all $j=1,\ldots,m$.   
One way to prove that no solution $\bm{x} \in \R^n$ exists---in algebraic terms, that this variety is empty---is to find a linear combination of the $f_j$ which is greater than zero for all $\bm{x}$.  Moreover, the \emph{positivstellensatz} of Krivine~\cite{Krivine1964} and Stengle~\cite{Stengle1974} shows that a polynomial is nonnegative over $\R^n$ if and only if it is a sum of squares of polynomials.  Thus we need polynomials $g_1, \ldots, g_m$ and $h_1, \ldots, h_t$ and a constant $\eps > 0$ (which we can always scale to $1$ if we like) such that 
\begin{equation}
	\label{eq:sos-refutation}
	\sum_{j=1}^m g_j(\bm{x}) f_j(\bm{x}) = S + \eps 
	\quad \text{where} \quad
	S = \sum_{\ell=1}^t h_\ell(\bm{x})^2 \, . 
\end{equation}
This proof technique is complete as well as sound.  That is, there is such a set of polynomials $\{g_j\}$ and $\{h_\ell\}$ if and only if no solution exists.  

Even when the $f_j$ are of low degree, the polynomials $g_j$ and $h_\ell$ might be of high degree, making them difficult to find.  However, we can ask when a refutation exists where both sides of~\eqref{eq:sos-refutation} have degree $\delta$ or less.  As we take $\delta=2, 4, 6, \ldots$ we obtain the \emph{SOS hierarchy}.  The case $\delta=2$ is typically equivalent to a familiar semidefinite relaxation of the problem.  More generally, a degree-$\delta$ refutation exists if and only if a certain semidefinite program on $O(n^\delta)$ variables is feasible: thus we can find degree-$\delta$ refutations, or confirm that they do not exist, in time $\poly(n^\delta)$~\cite{Shor1987,Nesterov2000,Parrilo2000,Lasserre2001}.  To see why, note that if we write a polynomial $S(\bm{x})$ as a bilinear form on monomials $x^{(\alpha)} = \prod_i x^{\alpha_i}$ of degree $\delta/2$, 
\[
	S(\bm{x}) = \sum_{\alpha,\alpha'} \mathcal{S}(\alpha,\alpha') \,x^{(\alpha)} x^{(\alpha')} \, , 
\]
then $S(\bm{x})$ is a sum of squares of degree $\delta/2$ polynomials if  and only if the matrix $\mathcal{S}$ is positive semidefinite, or equivalently if $\mathcal{S}$ is the sum of positive symmetric rank-one matrices.  These are outer products of vectors with themselves, so there are vectors $w_1,\ldots,w_t$ such that $\mathcal{S} = \sum_{\ell=1}^t \outerprod{w_\ell}{w_\ell}$ and $S = \sum_\ell h_\ell^2$ where $h_\ell(\bm{x}) = \sum_\alpha w_\ell(\alpha) x^{(\alpha)}$.  Finally, the constraint that $S = \sum_j g_j f_j  - \eps$ for some $\{g_j\}$ and some $\eps > 0$ corresponds to a set of linear inequalities on the entries of $\mathcal{S}$.

The dual object to a degree-$\delta$ refutation is a \emph{pseudoexpectation}.  This is a linear operator $\pseudo$ on polynomials of degree at most $\delta$ with the properties that 
\begin{align}
	\pseudo[1] &= 1 \label{eq:pseudo-one} \\
	\pseudo[f_j] &= 0 \text{ for all $j$} \label{eq:pseudo-con} \\
	\pseudo[p^2] &\ge 0 \text{ for any polynomial $p$ of degree at most $\delta/2$.} \label{eq:pseudo-square}
\end{align}
If we write $\pseudo$ as a bilinear form on monomials $x^{(\alpha)}$, then~\eqref{eq:pseudo-one} and~\eqref{eq:pseudo-con} are linear constraints on its entries, and~\eqref{eq:pseudo-square} states that this matrix is positive semidefinite.  The resulting SDP is dual to the SDP for refutations, so each of these SDPs is feasible precisely when the other is not.  Thus there is a degree-$\delta$ refutation if and only if no degree-$\delta$ pseudoexpectation exists, and vice versa.  

We can think of a pseudoexpectation as a way for an adversary to fool the SOS proof system.  The adversary claims there are are many solutions---even if in reality there are none---and offers to compute the expectation of any low-degree polynomial over the set of solutions.  As long as~\eqref{eq:pseudo-one} and~\eqref{eq:pseudo-con} hold, this appears to be a distribution over valid solutions, and as long as~\eqref{eq:pseudo-square} holds, the SOS prover cannot catch the adversary in an obvious lie like the claim that some quantity of degree $\delta/2$ has negative variance.
  


\subsection{Colorings, partitions, and sum-of-squares}

For a given graph $G$ with adjacency matrix $A$, we can encode the problem of $k$-colorability as the following system of polynomial equations in $kn$ variables $\bm{x} = \{x_{i,c}\}$, where $i \in [n]$ indexes vertices and $c \in [k]$ indexes colors:
\begin{align}
	\text{The $x_{i,c}$ are Boolean:}&	& \pbool_{i,c} &\triangleq x_{i,c}^2 - x_{i,c} = 0 &\forall  i,c 
	\label{eq:bool-constraint} \\
	\text{Each vertex has one color:}&	& \psing_{i} &\triangleq -1 + \sum_c x_{i,c} = 0 &\forall  i 
	\label{eq:indicator-constraint} \\
	\text{The coloring is proper:}&	& \pcol_{ij} &\triangleq \sum_c x_{i,c} \,x_{j,c} = 0 &\forall (i,j)\in E 
	\label{eq:col-constraint}
\end{align}
Then $G$ is $k$-colorable if and only if~\eqref{eq:bool-constraint}--\eqref{eq:col-constraint} has a solution in $\R^{kn}$.  
We can encode the stochastic block model similarly: fix $\tau$, and recall that a partition of $G$ into $k$ groups is \emph{good} if a fraction $\tau/k$ of the edges have endpoints in the same group.  If $G$ has $m$ edges, we can replace constraint \eqref{eq:col-constraint} with
\begin{align}
	\text{Good partition:}&	&  \pcut &\triangleq -\frac{\tau}{k} + \frac{1}{2m}\sum_{i,j} A_{ij} \sum_c x_{i,c} \,x_{j,c} = 0 \, . 
	\label{eq:cut-constraint}
\end{align}

A degree-$\delta$ sum-of-squares refutation of \eqref{eq:bool-constraint}--\eqref{eq:col-constraint} is an equation of the form
\begin{align} 
\label{eq:refutation}
	\sum_{i,c} b_{i,c} \pbool_{i,c} + \sum_i s_i \psing_i + \sum_{(i,j)\in E} g_{ij} \pcol_{ij} = S + \eps
\end{align}
where $b_{i,c}, s_i, g_{ij}$ are polynomials over $\bm{x}$, $S$ is a sum of squares of polynomials, $\epsilon$ is a small positive constant which we will omit when clear, and the degree of each side is at most $\delta$. Such an equation is a proof that no coloring exists.  Replacing $\sum_{i,j} g_{ij} \pcol_{ij}$ with $\gcut \pcut$ gives a refutation of the system formed by~\eqref{eq:bool-constraint}, \eqref{eq:indicator-constraint}, and~\eqref{eq:cut-constraint}, proving that no good partition exists.  We focus on refutations of degree two, which as we will see are related to a classic relaxation of graph coloring.

\subsection{The \lovasz $\vartheta$ function}  

An \emph{orthogonal representation} of a graph $G$ with $n$ vertices is an assignment of a unit vector $u_i \in \R^n$ to each vertex $i$ such that $\innerprod{u_i}{u_j} = 0$ for all $(i,j) \in E$.  The \lovasz function, denoted $\vartheta(\overline{G})$ by convention, is the smallest $\kappa$ for which there is an orthogonal representation $\{u_i\}$ and an additional unit vector $\zee \in \R^n$ such that $\innerprod{u_i}{\zee} = 1/\sqrt{\kappa}$: that is, such that all the $u_i$ lie on a cone\footnote{To see that this definition of $\vartheta$ is equivalent to the more common one that $\innerprod{u_i}{\zee} \le 1/\sqrt{\kappa}$ for every $i$, i.e., where the $u_i$ can be in the interior of this cone, simply rotate each $u_i$ in the subspace perpendicular to its neighbors until $\innerprod{u_i}{\zee}$ is exactly $1/\sqrt{\kappa}$.} of width $\cos^{-1} (1/\sqrt{\kappa})$.


The Gram matrix $P_{ij} = \innerprod{u_i}{u_j}$ of an orthogonal representation is positive semidefinite with $P_{ii}=1$ and $P_{ij}=0$ for $(i,j) \in E$. Adding an auxiliary row and column for the inner products with $\zee$, we can define $\vartheta$ in terms of a semidefinite program, 
\begin{align} 
	\label{eq:lovasz-primal}
	\vartheta(\overline{G}) = \min_P \kappa >0
		\qquad \text{such that} \qquad
		 	\begin{pmatrix} 1 & \bm{1}/\sqrt{\kappa} \\ \bm{1}/\sqrt{\kappa} & P \end{pmatrix} &\succeq 0 \\
   P_{ii} &= 1 \qquad \forall i \nonumber \\
   P_{ij} &= 0 \qquad \forall (i,j)\in E \nonumber
\intertext{where $\bm{1}$ is the $n$-dimensional vector whose entries are all $1$s. The dual of this program can be written as}
	\label{eq:lovasz dual}
	\vartheta(\overline{G}) = \max_D \langle D, \J \rangle 
		\qquad \text{such that} \qquad
			D &\succeq 0 \\
\tr D &= 1 \nonumber \\
 D_{ij} &= 0 \qquad \forall (i,j)\notin E \, , \nonumber
\end{align}
where $\J$ is the matrix of all $1$s and $\innerprod{A}{B} = \tr (A^\dagger B) = \sum_{i,j} A_{ij} B_{ij}$ denotes the matrix inner product.

If $G$ is $k$-colorable then $\vartheta(\overline{G}) \le k$, since we can use the first $k$ basis vectors $e_1, \ldots, e_k$ as an orthogonal representation and take $\zee = (1/\sqrt{k}) \sum_{t=1}^k e_t$.  Thus if $\vartheta(\overline{G}) > k$, the \lovasz function gives a polynomial-time refutation of $k$-colorability.  As stated above, degree-two sum-of-squares proofs typically correspond to well-known semidefinite relaxations, and the next theorem shows that this is indeed the case here.  
\begin{theorem} \label{thm:lovasz-col-equivalence}
	There is a degree-2 SOS refutation of $k$-colorability for a graph $G$ if and only if $\vartheta(\overline{G}) > k$.
\end{theorem}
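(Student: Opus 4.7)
The plan is to pass through SOS duality: the nonexistence of a degree-2 refutation of the form~\eqref{eq:refutation} is equivalent to the existence of a degree-2 pseudoexpectation $\pseudo$ with $\pseudo[1]=1$, $\pseudo[q^2]\ge 0$ for every affine $q$, and $\pseudo$ annihilating the ideal of~\eqref{eq:bool-constraint}--\eqref{eq:col-constraint} up to degree two. Writing $m_{i,c}=\pseudo[x_{i,c}]$ and $M_{ij,cd}=\pseudo[x_{i,c}x_{j,d}]$, the linear constraints read $M_{ii,cc}=m_{i,c}$, $\sum_c m_{i,c}=1$, $\sum_c M_{ij,cd}=m_{j,d}$, and $\sum_c M_{ij,cc}=0$ for $(i,j)\in E$, while positivity becomes PSD-ness of the $(1+kn)\times(1+kn)$ moment matrix $\tilde M$ whose $(0,0)$ entry is $1$, whose $\bigl(0,(i,c)\bigr)$ entries are $m_{i,c}$, and whose $\bigl((i,c),(j,d)\bigr)$ entries are $M_{ij,cd}$. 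The theorem then reduces to the constructive equivalence ``such a $\pseudo$ exists $\iff\vartheta(\overline G)\le k$''.

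To go from a $\vartheta$-witness to a pseudoexpectation, I would take unit vectors $u_i,\zee$ realizing~\eqref{eq:lovasz-primal} at $\kappa=k$, decompose $u_i=\zee/\sqrt{k}+y_i$ with $y_i\perp\zee$ (so that $\|y_i\|^2=(k-1)/k$ and $\langle y_i,y_j\rangle=-1/k$ on edges), and build the pseudoexpectation as a Gram matrix. The key ansatz uses simplex vectors $r_c=e_c-\tfrac{1}{k}\sum_t e_t\in\R^k$, which satisfy $\sum_c r_c=0$ and $\langle r_c,r_d\rangle=\mathbf 1[c=d]-1/k$. Fixing a unit $v_0$ orthogonal to $\R^k\otimes\R^n$, I would set
\[
v_{i,c}\;=\;\tfrac{1}{k}\,v_0+\tfrac{1}{\sqrt{k-1}}\,r_c\otimes y_i,\qquad m_{i,c}=\langle v_0,v_{i,c}\rangle,\qquad M_{ij,cd}=\langle v_{i,c},v_{j,d}\rangle.
\]
PSD-ness of $\tilde M$ is then automatic, and a routine expansion gives $m_{i,c}=1/k$, $M_{ii,cd}=\mathbf 1[c=d]/k$, and, for $i\ne j$, $M_{ij,cd}=\tfrac{1}{k^2}+\tfrac{\mathbf 1[c=d]-1/k}{k-1}\langle y_i,y_j\rangle$; summation then yields the four linear constraints, in particular $\sum_c M_{ij,cc}=\tfrac{1}{k}+\langle y_i,y_j\rangle=0$ on edges.

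For the reverse, I would Gram-decompose the PSD moment matrix $\tilde M$ to extract a unit $v_0$ and vectors $v_{i,c}$ with the prescribed inner products, then concatenate across colors by setting $u_i=(v_{i,1},\ldots,v_{i,k})$ and $\zee=\tfrac{1}{\sqrt{k}}(v_0,\ldots,v_0)$ in a $k$-fold direct sum. The pseudoexpectation constraints immediately give $\|u_i\|^2=\sum_c m_{i,c}=1$, $\|\zee\|^2=1$, $\langle u_i,u_j\rangle=\sum_c M_{ij,cc}=0$ for $(i,j)\in E$, and $\langle u_i,\zee\rangle=\tfrac{1}{\sqrt{k}}\sum_c m_{i,c}=1/\sqrt{k}$, certifying $\vartheta(\overline G)\le k$ via~\eqref{eq:lovasz-primal} after projecting to the span to place the vectors in $\R^n$.

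The hard part will be guessing the right ansatz in the first direction; once it is in place the calculations are mechanical. The tensor product $r_c\otimes y_i$ cleanly decouples the ``color'' data (in the $r_c$, which encode the $k$-way partition structure) from the ``graph'' data (in the $y_i$, which come from the orthogonal representation). The normalizations $1/k$ and $1/\sqrt{k-1}$ are then forced by $M_{ii,cc}=m_{i,c}$, and the sum identity $\sum_c\|r_c\|^2=k-1$ makes the edge constraint $\sum_c M_{ij,cc}=0$ fall out of $\langle u_i,u_j\rangle=0$ without further tuning. I do not anticipate further subtleties: strong SOS duality for this system is routine, since both the Lov\'asz primal~\eqref{eq:lovasz-primal} and the degree-2 SOS relaxation are strictly feasible for $k>1$.
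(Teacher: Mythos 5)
Your proposal is correct, and it takes a genuinely different route from the paper's.

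The paper works entirely on the pseudoexpectation side and never builds explicit Gram vectors. It first observes that the constraints are invariant under the $S_k$ action permuting colors, so one may average an arbitrary pseudoexpectation over this action and assume $\pseudo$ is color-symmetric. This forces $\pseudo[x_{i,c}] = 1/k$ and forces each $k\times k$ block $\calE_{ij}$ of the moment matrix to be a combination of $\1_k$ and $\J_k$, after which the whole $(nk+1)\times(nk+1)$ PSD constraint collapses, via a Schur-complement lemma and the projection $\1_k - \J_k/k$, to a single $n\times n$ condition $P - \J_n/k \succeq 0$, i.e.\ the Lov\'asz SDP. You instead skip the symmetrization and give two explicit constructions: (a) from a $\vartheta$-witness $\{u_i\},\zee$ you decompose $u_i = \zee/\sqrt{k}+y_i$ and define Gram vectors $v_{i,c}=\tfrac{1}{k}v_0+\tfrac{1}{\sqrt{k-1}}\,r_c\otimes y_i$ using simplex vectors $r_c$; (b) conversely you Gram-decompose the moment matrix and concatenate across colors to recover $\{u_i\},\zee$. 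Your computations check out (e.g.\ $\sum_c M_{ij,cd}=1/k=m_{j,d}$, $\sum_c M_{ij,cc}=1/k+\langle y_i,y_j\rangle=0$ on edges, and $M_{ii,cc}=1/k$ force the normalizations). The two approaches buy different things: the paper's symmetry argument also produces the $n\times n$ ``pseudocorrelation'' matrix $P$ as a canonical object, which it reuses later both to interpret the pseudoexpectation combinatorially (via $\tfrac{k}{k-1}(P_{ij}-1/k)$) and to prove Theorem~\ref{thm:lovasz-cut-equivalence} by a small modification; your tensor-product ansatz is more self-contained and makes the ``decoupling of color data from graph data'' transparent. Both treatments, yours and the paper's, delegate the final step---that a degree-2 refutation exists iff no degree-2 pseudoexpectation does---to standard SOS/SDP duality without proving the relevant Slater/closedness condition, so you are not weaker than the paper there. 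One small point you leave implicit: your forward direction assumes the SDP~\eqref{eq:lovasz-primal} is feasible at $\kappa=k$ exactly, whereas $\vartheta(\overline G)\le k$ only gives feasibility at some $\kappa_0\le k$; this is fine because feasibility is monotone in $\kappa$ (via the Schur complement, $P-\J_n/\kappa\succeq P-\J_n/\kappa_0$ for $\kappa\ge\kappa_0$), but it deserves a sentence.
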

\noindent
We prove this in the Appendix, where we show that any orthogonal representation of $G$ that lies on an appropriate cone lets us define a pseudoexpectation for the system~\eqref{eq:bool-constraint}--\eqref{eq:col-constraint}.  This will also allow us to modify the SDPs for refutations and pseudoexpectations, and work with simplified but equivalent versions.



\subsection{Good partitions and a relaxed \lovasz function}  

The reader may have noticed that while the coloring constraint~\eqref{eq:col-constraint} fixes the inner product $\sum_{c} x_{i,c} x_{j,c} = \innerprod{x_i}{x_j}$ to zero for each edge $(i,j) \in E$, the  ``good partition'' constraint~\eqref{eq:cut-constraint} only fixes the sum of all these inner products.  This suggests a slight relaxation of the \lovasz $\vartheta$ function, where we weaken the SDP~\eqref{eq:lovasz-primal} by replacing the individual constraints on $P_{ij}$ for all $(i,j) \in E$ with a constraint on their sum.  In other words, we allow a vector coloring where neighboring vectors are orthogonal on average.  We denote the resulting function $\hat{\vartheta}$:
\begin{align}
	\label{eq:cut-sdp-primal}
	\hat{\vartheta}(\overline{G}) = \min_{P} \kappa > 0
	\qquad \text{such that} \qquad
	\begin{pmatrix} 1 & \bm{1}/\sqrt{\kappa} \\ \bm{1}/\sqrt{\kappa} & P \end{pmatrix} &\succeq 0 \\
	 P_{ii} &= 1 \qquad \forall i \nonumber \\
	\innerprod{P}{A} &= 0 \, , \nonumber
	\intertext{The dual SDP tightens~\eqref{eq:lovasz dual} by requiring that the matrix $D$ take the same value on every edge.  Thus $D$ is a multiple of $A$ plus a diagonal matrix,}
	\label{eq:cut-sdp-dual}
	\hat \vartheta(\overline{G}) = \max_{\eta,\bm{b}} \langle D, \J \rangle  
	\qquad \text{such that} \qquad
	D \triangleq \eta A + \diag \bm{b} &\succeq 0 \\
	\tr D = \innerprod{\bm{b}}{\bm{1}} &= 1 \nonumber
\end{align}
Since $\hat{\vartheta}$ is a relaxation of $\vartheta$, we always have 
$\hat{\vartheta}(\overline{G}) \le \vartheta(\overline{G})$.

This modified \lovasz function $\hat{\vartheta}$ is equivalent to degree-two SOS for good partitions in the dissasortative case of the block model, in the following sense.
\begin{theorem}
\label{thm:lovasz-cut-equivalence}
If $\tau < 1$, there exists a degree-two SOS refutation of a partition of $G$ where a fraction $\tau/k$ of the edges are within groups if and only if
	\begin{equation} 
		\hat{\vartheta}(\overline{G}) > \frac{k-\tau}{1 - \tau} \, .
	\end{equation}
\end{theorem}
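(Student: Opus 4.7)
The plan is to mirror the proof of Theorem \ref{thm:lovasz-col-equivalence}: via SOS duality, it suffices to show that a degree-two pseudoexpectation for the system \eqref{eq:bool-constraint}, \eqref{eq:indicator-constraint}, \eqref{eq:cut-constraint} exists if and only if $\hat{\vartheta}(\overline{G}) \le \kappa$, where $\kappa := (k-\tau)/(1-\tau)$. The theorem then follows since a degree-two refutation exists exactly when no pseudoexpectation does.

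For the forward direction, given a pseudoexpectation $\pseudo$, I Cholesky-factor the moment matrix $M_{(i,c),(j,c')} := \pseudo[x_{i,c}x_{j,c'}]$ into Gram vectors $v_{i,c}$. Using $\pseudo[\psing_i \cdot g]=0$ for $g$ of degree at most one, the vectors $u_i := \sum_c v_{i,c}$ all coincide with a single unit vector $u$. Setting $w_{i,c} := v_{i,c} - (1/k)u$ gives $\sum_c w_{i,c} = 0$ and, via $\pbool_{i,c}$, $\sum_c \|w_{i,c}\|^2 = (k-1)/k$. Rescaling $\sum_c\langle w_{i,c},w_{j,c}\rangle$ to unit diagonal produces a PSD matrix $\tilde P$ with $\tilde P_{ii} = 1$, and $\pcut$ translates to $\langle \tilde P, A\rangle = 2m(\tau-1)/(k-1)$. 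The convex combination $P := a\tilde P + (1-a)\J$ with $a := (k-1)/(k-\tau) \in (0,1)$ is then PSD, has unit diagonal, and---by the choice of $a$---satisfies $\langle P, A\rangle = 0$; a Schur-complement computation confirms $P - (1/\kappa)\J \succeq 0$, so $P$ is feasible for \eqref{eq:cut-sdp-primal} at value $\kappa$. For the converse, given a feasible $P$ achieving $\kappa$, I invert: set $\tilde P := a^{-1}(P - (1-a)\J) \succeq 0$, pick Gram vectors $y_i$ of $\tilde P$, tensor with a simplex frame $\{s_c\}$ in $\R^k$ (unit vectors summing to zero with pairwise inner products $-1/(k-1)$), and form $v_{i,c} := (1/k)u + (\sqrt{k-1}/k)\,y_i\otimes s_c$ for a fresh unit vector $u$ orthogonal to the simplex embedding. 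Defining $\pseudo[x_{i,c}x_{j,c'}] := \langle v_{i,c}, v_{j,c'}\rangle$, $\pseudo[x_{i,c}] := 1/k$, and $\pseudo[1] := 1$ yields a valid pseudoexpectation: the moment matrix is Gram, $\pbool_{i,c}$ vanishes from $\|v_{i,c}\|^2 = 1/k$, and $\psing_i\cdot x_{j,c}$ vanishes because $\sum_{c'}\langle s_{c'}, s_c\rangle = 0$; finally, $\pseudo[\pcut] = 0$ follows from $\langle P, A\rangle = 0$ together with $\langle \tilde P, A\rangle = 2m(\tau-1)/(k-1)$.

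The main technical obstacle is that the same coefficient $a$ must simultaneously zero out $\langle P, A\rangle$ (forcing $a = (k-1)/(k-\tau)$) and push the Schur-complement threshold to $1/\kappa = 1-a = (1-\tau)/(k-\tau)$. This alignment works precisely because $\tau < 1$: then $1-a > 0$, so $P = a\tilde P + (1-a)\J$ is a genuine nonnegative combination of the PSD matrices $\tilde P$ and $\J$. For $\tau \ge 1$ the argument breaks down, consistent with $\hat\vartheta$ being vacuous in the assortative regime and with the fact that good partitions trivially exist for large $\tau$.
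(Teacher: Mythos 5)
Your proof is correct and takes essentially the same route as the paper. Both proofs reduce, via SOS duality, to showing that a degree-two pseudoexpectation exists iff a feasible $P$ for $\hat\vartheta$ exists at $\kappa=(k-\tau)/(1-\tau)$, and both hinge on the same affine shift of $P$ by a multiple of $\J$ to zero out $\langle P,A\rangle$ while preserving the Schur-complement condition. The main packaging difference is that the paper first symmetrizes over $S_k$ to put the moment matrix into the canonical form $\calE = \frac{1}{k-1}(P - \J_n/k)\otimes(\1_k - \J_k/k) + \J_{nk}/k^2$, isolates an intermediate quantity $\hat\vartheta_\tau(\overline{G})$ from the ``raw'' cut constraint $\langle P,A\rangle = (\tau/k)dn$, and then proves a separate lemma that $\hat\vartheta = (\hat\vartheta_\tau - \tau)/(1-\tau)$ via the transformation $P_\tau = (1-\tau/\kappa_\tau)P + (\tau/\kappa_\tau)\J$; you instead work directly with Gram vectors of the moment matrix, extract the same $n\times n$ matrix by summing over colors (equivalently, taking the trace of each $k\times k$ block, which is invariant under $S_k$ so the symmetrization becomes unnecessary), and apply the affine shift $P = a\tilde P + (1-a)\J$, $a=(k-1)/(k-\tau)$, in one step. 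Indeed your $P$ and the paper's transformed $P$ coincide identically, so this is the same arithmetic organized more compactly; your route buys a slightly shorter exposition at the cost of not exhibiting the structured form of $\calE$, which the paper reuses for Theorem~\ref{thm:lovasz-col-equivalence}.
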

\noindent Once again we leave the proof to the Appendix. Note that the SDP~\eqref{eq:cut-sdp-primal} for $\hat{\vartheta}$ contains no information about $k$ or $\tau$: this relaxed orthogonal representation has the uncanny capacity to fool degree-two SOS about an entire family of related cuts of different sizes and qualities. 

\subsection{Upper and lower bounds} 

With these theorems in hand, we can set about producing degree-two sum-of-squares refutations and pseudoexpectations for our problems; throughout this section we will refer to these simply as `refutations' and `pseudoexpectations'.  In fact, the same construction will give us asymptotically optimal refutations and pseudoexpectations for both the coloring and partition problems.

To warm-up, we have the following simple construction of a refutation, which we will phrase in terms of the \lovasz theta function and its relaxed version.
\begin{lemma}
Let $G$ be a $d$-regular graph, and let $\lambda_{\min}$ be the smallest eigenvalue of its adjacency matrix $A$. Then
	\begin{equation}
		\vartheta(\overline{G}) \ge \hat{\vartheta}(\overline{G}) \ge 1 + d/|\lambda_{\min}| \, .
	\end{equation}
\end{lemma}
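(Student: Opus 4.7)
The first inequality $\vartheta(\overline{G}) \ge \hat{\vartheta}(\overline{G})$ is immediate, since $\hat{\vartheta}$ is a relaxation of $\vartheta$: every feasible $P$ for the primal SDP \eqref{eq:lovasz-primal} is also feasible for \eqref{eq:cut-sdp-primal}, because the pointwise constraints $P_{ij}=0$ for $(i,j)\in E$ force their weighted sum $\langle P,A\rangle$ to vanish as well.

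For the second inequality, the plan is to exhibit a feasible dual solution to \eqref{eq:cut-sdp-dual} achieving value $1+d/|\lambda_{\min}|$. The natural ansatz is the uniform one: take
\[
\bm{b} = \tfrac{1}{n}\bm{1}, \qquad \eta = \frac{1}{n|\lambda_{\min}|}, \qquad D = \eta A + \tfrac{1}{n}I.
\]
The trace constraint $\tr D = \langle \bm{b},\bm{1}\rangle = 1$ is immediate. For positive semidefiniteness, since $A$ is symmetric its eigenvalues $\mu$ satisfy $\mu \ge \lambda_{\min}$, so the eigenvalues of $D$ are $\eta\mu + 1/n \ge \eta\lambda_{\min} + 1/n = 0$ by our choice of $\eta$ (note $\lambda_{\min}<0$ for any graph with an edge, so $|\lambda_{\min}| = -\lambda_{\min}$).

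To compute the objective, I would use $d$-regularity: the all-ones vector satisfies $A\bm{1} = d\bm{1}$, so
\[
\langle D, \J\rangle = \bm{1}^{\!\top} D\bm{1} = \eta\,\bm{1}^{\!\top}\! A \bm{1} + \tfrac{1}{n}\,\bm{1}^{\!\top}\bm{1} = \eta \cdot nd + 1 = \frac{d}{|\lambda_{\min}|} + 1.
\]
This feasible point witnesses $\hat{\vartheta}(\overline{G}) \ge 1 + d/|\lambda_{\min}|$, completing the proof. There is no real obstacle here — the only subtlety is recognizing that the relaxed dual \eqref{eq:cut-sdp-dual} restricts $D$ to $\eta A + \diag\bm{b}$, which together with $d$-regularity makes the spectrum of $D$ a linear shift of the spectrum of $A$, so optimizing $\eta$ against the constraint $D\succeq 0$ is immediate.
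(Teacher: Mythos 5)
Your proposal is correct and is essentially identical to the paper's proof: the paper also exhibits the dual feasible point $D = \tfrac{1}{n}\bigl(\1 + \tfrac{1}{|\lambda_{\min}|}A\bigr)$ for the SDP \eqref{eq:cut-sdp-dual} and evaluates $\langle D,\J\rangle$ via $\langle A,\J\rangle = dn$. You simply spell out the trace and PSD checks that the paper leaves implicit.
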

\begin{proof}
We construct a feasible solution $D$ to the dual SDP~\eqref{eq:cut-sdp-dual} by taking
$$
	D \triangleq \frac{1}{n}\left(\1 + \frac{1}{|\lambda_{\min}|}A\right) \, , 
$$
and use the fact that $\innerprod{A}{\J} = dn$.
\end{proof}
By invoking Friedman's theorem~\cite{friedman} that (as $n\to \infty$) the smallest eigenvalue of a random $d$-regular graph is with high probability larger than $-2(1 + \eps)\sqrt{d-1}$ for any $\eps>0$, we obtain:
\begin{corollary}
	When $G = G_{n,d}$, for any $\eps > 0$, with high probability
	\begin{equation}
		\vartheta(\overline{G}) \ge \hat{\vartheta}(\overline{G}) > 1 + \frac{d}{2\sqrt{d-1}} - \eps \, .
	\end{equation}
\end{corollary}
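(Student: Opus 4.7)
The plan is to combine the preceding lemma with Friedman's theorem in a purely mechanical way, being careful only about how the two $\eps$ parameters match up. First I would recall that the lemma gives, deterministically for any $d$-regular graph,
\[
\vartheta(\overline{G}) \;\ge\; \hat{\vartheta}(\overline{G}) \;\ge\; 1 + \frac{d}{|\lambda_{\min}(A)|},
\]
so the entire task is to upper-bound $|\lambda_{\min}(A)|$ with high probability when $A$ is the adjacency matrix of $G_{n,d}$.

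Next I would invoke Friedman's theorem: for any $\eps' > 0$, with high probability every eigenvalue of $A$ other than the trivial eigenvalue $d$ satisfies $|\lambda| \le 2(1+\eps')\sqrt{d-1}$. In particular, $|\lambda_{\min}(A)| \le 2(1+\eps')\sqrt{d-1}$ w.h.p. Plugging into the lemma then yields, with high probability,
\[
\hat{\vartheta}(\overline{G}) \;\ge\; 1 + \frac{d}{2(1+\eps')\sqrt{d-1}}.
\]

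The final step is to translate $\eps'$ into the $\eps$ in the statement. Since $d$ is a fixed constant, the map $\eps' \mapsto d / \bigl(2(1+\eps')\sqrt{d-1}\bigr)$ is continuous at $\eps'=0$, so given any target $\eps > 0$ one can choose $\eps' > 0$ small enough that
\[
\frac{d}{2(1+\eps')\sqrt{d-1}} \;>\; \frac{d}{2\sqrt{d-1}} - \eps.
\]
Combining with $\vartheta(\overline{G}) \ge \hat{\vartheta}(\overline{G})$ gives the claimed bound with high probability.

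There is no genuine obstacle here — the content is entirely in the previous lemma and in Friedman's theorem, both of which have already been invoked. The only thing to watch is that Friedman's bound is quoted with a generic $\eps'$, whereas the target inequality uses an additive $\eps$ on the whole expression, so one has to shrink the Friedman parameter rather than use it directly; this is a one-line continuity argument rather than anything substantive.
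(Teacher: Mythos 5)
Your proof is correct and is exactly the argument the paper intends: apply the preceding lemma deterministically, bound $|\lambda_{\min}(A)|$ with high probability via Friedman's theorem, and convert the multiplicative $\eps'$ in Friedman's bound into the additive $\eps$ of the statement by a one-line continuity argument (which the paper elides as immediate). No gap.
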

\noindent Putting this together with Theorems \ref{thm:lovasz-col-equivalence} and \ref{thm:lovasz-cut-equivalence} gives
\begin{corollary}
If $G = G_{n,d}$ and $\tau < 1$, with high probability there exists a refutation of a partition with a fraction $\tau/k$ of within-group edges when
	\begin{equation}  \label{eq:refutation-exists}
		\frac{k - \tau}{1 - \tau} < 1 + \frac{d}{2\sqrt{d-1}} \, . 
	\end{equation}
Setting $\tau = 0$, a refutation of $k$-colorability exists with high probability when 
\[
k < 1 + \frac{d}{2\sqrt{d-1}} \, . 
\]
\end{corollary}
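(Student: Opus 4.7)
The plan is to chain together the preceding corollary on $\hat\vartheta(\overline{G_{n,d}})$ with the two equivalence theorems (Theorem~\ref{thm:lovasz-col-equivalence} and Theorem~\ref{thm:lovasz-cut-equivalence}) that translate strict lower bounds on $\vartheta$ or $\hat\vartheta$ into degree-two sum-of-squares refutations. The arithmetic needed is entirely contained in choosing the slack parameter $\eps$ small enough, so the proof is essentially a one-line deduction for each of the two claims.

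For the partition claim, suppose $\tau<1$ and $\frac{k-\tau}{1-\tau}<1+\frac{d}{2\sqrt{d-1}}$. First I would pick a constant $\eps>0$ small enough that
\[
\frac{k-\tau}{1-\tau} < 1+\frac{d}{2\sqrt{d-1}}-\eps,
\]
which is possible because the inequality in the hypothesis is strict and the right-hand side of the hypothesis is a constant. The preceding corollary then tells us that with high probability $\hat\vartheta(\overline{G_{n,d}})>1+\frac{d}{2\sqrt{d-1}}-\eps$, so with high probability $\hat\vartheta(\overline{G_{n,d}})>\frac{k-\tau}{1-\tau}$. Theorem~\ref{thm:lovasz-cut-equivalence} then produces the desired degree-two SOS refutation of the existence of a partition with a $\tau/k$ fraction of within-group edges, which is what we wanted.

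For the coloring claim, I would apply the same argument to the tighter quantity $\vartheta$. The preceding corollary also records the bound $\vartheta(\overline{G_{n,d}})\ge \hat\vartheta(\overline{G_{n,d}})>1+\frac{d}{2\sqrt{d-1}}-\eps$ with high probability. If $k<1+\frac{d}{2\sqrt{d-1}}$, choose $\eps$ so that $k<1+\frac{d}{2\sqrt{d-1}}-\eps$ as well; then with high probability $\vartheta(\overline{G_{n,d}})>k$, and Theorem~\ref{thm:lovasz-col-equivalence} yields a degree-two SOS refutation of $k$-colorability. Equivalently this follows by setting $\tau=0$ in the partition statement and observing $\frac{k-0}{1-0}=k$.

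There is no real obstacle: all of the work has been done in the equivalence theorems, in the SDP feasibility argument giving $\hat\vartheta\ge 1+d/|\lambda_{\min}|$, and in Friedman's theorem. The only thing to be slightly careful about is that both the hypothesis and the consequence are strict inequalities, so one must absorb the $-\eps$ into the strict slack provided by the hypothesis, which is a standard move.
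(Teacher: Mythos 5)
Your proposal is correct and matches the paper's argument exactly: the paper derives this corollary in one line by combining the preceding corollary (Friedman's theorem applied to the dual SDP feasible point, giving $\hat\vartheta(\overline{G_{n,d}}) > 1 + d/(2\sqrt{d-1}) - \eps$ w.h.p.) with Theorems~\ref{thm:lovasz-col-equivalence} and~\ref{thm:lovasz-cut-equivalence}. Your explicit handling of the $\eps$ slack and the observation that the coloring case follows either via $\vartheta \ge \hat\vartheta$ or by setting $\tau=0$ are both fine and consistent with the paper.
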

\noindent 
Note that for large $k$, the minimum value of $d$ satisfying~\eqref{eq:refutation-exists} is a factor of four above the Kesten-Stigum threshold in both the coloring and partition problems.

Our construction for this lower bound on $\vartheta$ is quite simple, but remarkably we find that for both the coloring and partition problems, it is asymptotically optimal in $d$ and $k$. In particular,
\begin{theorem}
	\label{thm:lovasz-upper-bound}
For any $d$-regular graph $G$ with girth at least $\gamma$, we have
	\begin{equation}
	\label{eq:lovasz-upper-bound}
		\hat{\vartheta}(\overline{G}) \le \vartheta(\overline{G}) < 1 + \frac{d}{2(1 - \eps_\gamma)\sqrt{d-1}} \, .
	\end{equation}
	where $\eps_\gamma$ is a sequence of constants which decrease to zero as $\gamma \to \infty$.
\end{theorem}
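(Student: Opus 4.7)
The plan is to construct a feasible solution to the primal SDP~\eqref{eq:lovasz-primal}. I first separate the rank-one direction: look for $P$ of the form $P = \alpha^2 \J + \beta^2 W$ with $\alpha^2+\beta^2 = 1$, $W \succeq 0$, $W_{ii}=1$, and $W_{ij}=\mu$ for every edge $(i,j)$. Then $P_{ii}=1$ automatically, $P_{ij} = \alpha^2 + \beta^2\mu$, so the orthogonality constraint $P_{ij}=0$ on edges becomes $\mu = -\alpha^2/\beta^2 = -1/(\kappa-1)$ once we set $\alpha^2 = 1/\kappa$; moreover $P - \J/\kappa = \beta^2 W$, so the Schur form of~\eqref{eq:lovasz-primal} reduces to the single requirement $W \succeq 0$. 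Thus it suffices to exhibit a PSD matrix $W$ with $W_{ii}=1$ and $W_{ij}=\mu<0$ on edges, taking $|\mu|$ as large as possible; the resulting bound will be $\vartheta(\overline{G}) \le 1 + 1/|\mu|$.

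I next construct $W$ as a polynomial in $A$.  Let $\ell$ be the largest integer with $2\ell+2 \le \gamma$, and let $\nu_d$ denote the Kesten--McKay spectral measure at the root of the infinite $d$-regular tree $T_d$, which is supported on $[-2\sqrt{d-1},2\sqrt{d-1}]$. Given a polynomial $q$ of degree at most $\ell$, set
\[
W := \frac{q(A)^2}{\int q(x)^2\,\mathrm{d}\nu_d(x)},
\]
which is automatically PSD.  The girth hypothesis implies that every walk in $G$ of length at most $2\ell$ starting at a vertex $v$ or joining the endpoints of an edge $(v,w)$ stays inside a subtree of $G$ isomorphic to the corresponding subtree of $T_d$; hence $W_{vv}$ and $W_{vw}$ (for $v\sim w$) equal their counterparts on $T_d$.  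By the spectral theorem on $T_d$, together with the identification of the spherical function at distance one as $x/d$, this gives $W_{vv}=1$ and
\[
W_{vw} = \mu(q) := \frac{1}{d}\cdot\frac{\int q(x)^2\,x\,\mathrm{d}\nu_d(x)}{\int q(x)^2\,\mathrm{d}\nu_d(x)},
\]
so $\mu(q)$ is $1/d$ times the mean of the probability measure obtained by tilting $\nu_d$ by $q^2$.

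To finish the argument I would choose $q$ so as to push that tilted mean as far to the left of $[-2\sqrt{d-1},2\sqrt{d-1}]$ as possible.  This is a classical Gauss--quadrature extremal problem: over polynomials $q$ of degree $\le\ell$, the minimum of $d\cdot\mu(q)$ equals the smallest zero $x_\ell^-$ of the $(\ell+1)$-th orthonormal polynomial $\phi_{\ell+1}$ for $\nu_d$.  One can read the Jacobi matrix for $\nu_d$ off the tree---its off-diagonal entries are $\sqrt{d}$ between rows $0$ and $1$ and $\sqrt{d-1}$ thereafter---so for $k\ge 2$ the $\phi_k$ obey the same three-term recurrence, after the rescaling $x \mapsto x/(2\sqrt{d-1})$, as the Chebyshev polynomials of the second kind.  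Combined with the identity $|\mu(q)|_{\max} = |x_\ell^-|/d$ from the previous paragraph, the theorem reduces to showing $|x_\ell^-| = 2\sqrt{d-1}(1-\eps_\gamma)$ with $\eps_\gamma \to 0$ as $\gamma\to\infty$.

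The hard part will be that final step.  The recurrence for $\phi_k$ departs from the pure Chebyshev recurrence in its first two off-diagonal entries---a boundary effect coming from the root of $T_d$---so one cannot simply invoke classical Chebyshev zero asymptotics.  Instead I would solve the recurrence explicitly, writing the general solution for $k\ge 2$ as $A(\theta)\cos(k\theta) + B(\theta)\sin(k\theta)$ with $x = 2\sqrt{d-1}\cos\theta$ and $A(\theta),B(\theta)$ determined by matching $\phi_2$ and $\phi_3$, and then locating $x_\ell^-$ as the point nearest $-2\sqrt{d-1}$ where $\tan(\ell\theta) = -A(\theta)/B(\theta)$.  Standard asymptotics then yield $\pi - \theta_\ell^- = O(1/\ell)$, giving $\eps_\gamma = 1 - \cos(\pi - \theta_\ell^-) = O(1/\ell^2)$ and completing the proof.
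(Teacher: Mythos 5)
Your proposal is correct and follows essentially the same route as the paper's Section~\ref{sec:pseudoexpectation}: you reduce to finding a PSD matrix $W$ with unit diagonal and constant negative edge entries (the paper's $P$ in its shifted SDP~\eqref{eq:lovasz-shift-sdp}), build it as a nonnegative polynomial in $A$ whose diagonal and edge entries are forced to their tree values by the girth hypothesis, and identify the optimal negative edge value with the smallest root of an orthogonal polynomial for the Kesten--McKay measure via Gauss quadrature. The only cosmetic difference is that you restrict to perfect squares $q(A)^2$ from the start (which is where the paper's optimization also lands) rather than expanding $P$ in the non-backtracking basis $A^{(t)}$ and then discovering the extremal $f$ is a square; also, a small nit: the Jacobi matrix for the Kesten--McKay measure differs from the Chebyshev one only in the single off-diagonal entry $a_0=\sqrt d$ (versus $\sqrt{d-1}$ for $t\ge 1$), not in its ``first two'' entries, though this does perturb two steps of the three-term recurrence, so your explicit-solution plan for locating the extreme zero is fine.
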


Since for any constant $\gamma$ a random regular graph has girth $\gamma$ with positive probability~\cite[Theorem 2.12]{wormald-models}, we rely on the following result showing that $\vartheta(\overline{G_{n,d}})$ is concentrated in an interval of width one.  The proof is essentially the same as that of~\cite{achlioptas-moore-reg} for the chromatic number, and is given in the Appendix.

\begin{lemma}
	\label{lem:vartheta-concentrated}
	Let $\theta \ge 3$.  If $\vartheta(\overline{G_{n,d}}) \le \theta$ with positive probability, then $\vartheta(\overline{G_{n,d}}) \le \theta+1$ with high probability.  
\end{lemma}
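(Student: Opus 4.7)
The plan is to follow the Shamir--Spencer / Achlioptas--Moore strategy of analyzing an auxiliary ``defect'' Lipschitz variable via an edge-switching martingale on the configuration model for $G_{n,d}$. Define
\[
Y(G) \triangleq \min\bigl\{\,|S| : S \subseteq V(G),\ \vartheta\bigl(\overline{G[V \setminus S]}\bigr) \le \theta\,\bigr\},
\]
so that $Y(G) = 0$ if and only if $\vartheta(\overline{G}) \le \theta$. The first step is to observe that $Y$ is $O(1)$-Lipschitz under a single pair-swap in the configuration model, since such a swap affects only four endpoints, each of which can be absorbed into the deletion set $S$ at unit cost. Azuma--Hoeffding then gives the concentration bound $\Pr[\,|Y - \E Y| \ge t\,] \le 2 \exp(-t^2 / C n)$ for some constant $C = C(d)$.

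Next I would convert the hypothesis $\Pr[Y = 0] \ge \alpha > 0$ into a high-probability upper bound on $Y$. The lower Azuma tail forces $\E Y \le \sqrt{C n \log(1/\alpha)}$, since otherwise the atom of $Y$ at $0$ would have exponentially small mass. Feeding this back into the upper tail yields $Y \le s_n$ with high probability, where $s_n = O(\sqrt{n \log n}) = o(n)$; so w.h.p.\ there exists a set $S$ of size at most $s_n$ with $\vartheta(\overline{G - S}) \le \theta$.

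Finally I need to show that this implies $\vartheta(\overline{G}) \le \theta + 1$. Given an orthogonal representation of $G - S$ on the cone of half-angle $\cos^{-1}(1/\sqrt{\theta})$ about a unit vector $\zee$, I would adjoin $\chi(G[S])$ fresh orthonormal directions to the ambient space, assign to each $v \in S$ the unit vector along the fresh direction corresponding to its color in an optimal coloring of $G[S]$, and tilt $\zee$ into the span of the new directions by an angle chosen so that every representation vector (old or new) lies on the cone of half-angle $\cos^{-1}(1/\sqrt{\theta + \chi(G[S])})$. A short computation shows that this yields $\vartheta(\overline{G}) \le \theta + \chi(G[S])$. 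The main obstacle, and where the hypothesis that $G = G_{n,d}$ really enters, is the structural claim that w.h.p.\ the witness $S$ can be chosen so that $\chi(G[S]) \le 1$, i.e., $S$ is independent in $G$. I expect this to follow, as in \cite{achlioptas-moore-reg}, from a first-moment enumeration on small subgraphs of the configuration model --- short cycles appear only $O(1)$ times in expectation, and sets of size $o(n)$ have negligible internal average degree --- together with the freedom to enlarge $S$ to a slightly bigger set $S'$ without increasing $Y$ (by monotonicity of $\vartheta(\overline{G - S})$ in $S$) and then thinning $S'$ to an independent witness using this sparsity of $G[S']$.
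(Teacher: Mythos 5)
Your high-level strategy --- concentrate an auxiliary Lipschitz ``defect'' variable on the configuration model, obtain a witness set $S$ of size $o(n)$, then repair the orthogonal representation on $S$ --- matches the paper's, and the concentration half of your argument is sound: your deletion-set size $Y$ is comparable to the paper's count $X$ of violated edge constraints (the paper takes $S$ to be the set of endpoints of violated edges, so $|S| \le 2X$), and both are $O(1)$-Lipschitz under configuration-model pair-swaps, so Azuma applied to the one-sided event $\{Y=0\}$ does force $\E Y = O(\sqrt{n\log n})$ and hence $Y = o(n)$ w.h.p.

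The gap is in the repair step. You correctly observe that adjoining $r = \chi(G[S])$ fresh mutually orthogonal directions and tilting $\zee$ towards their centroid yields $\vartheta(\overline{G}) \le \theta + r$, so to reach $\theta+1$ you need $S$ independent. But nothing guarantees this: any witness set $S$ will generically contain both endpoints of a bad edge, so $G[S]$ has edges; sparsity of small induced subgraphs in $G_{n,d}$ gives low \emph{average} degree in $G[S]$, not independence (a matching already has average degree $1$). Your proposed fix --- enlarge $S$ to $S'$, then ``thin'' $S'$ to an independent witness --- does not work: $\vartheta(\overline{G[V \setminus S]})$ is monotone nonincreasing as $S$ grows, so \emph{removing} vertices to get an independent subset only reintroduces bad constraints. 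The paper resolves this obstacle differently: rather than making $S$ independent, it greedily grows $S$ to a (still $o(n)$) set $U$ by absorbing any edge $(i,j)$ both of whose endpoints already have a neighbor in $U$, which forces the boundary $I = N(U)\setminus U$ to be an \emph{independent} set. It then assigns one fresh vector $w$, orthogonal to $\zee$ and to all retained $u_i$, to every vertex of $I$, and three vectors $y_1,y_2,y_3$ to the $3$-colorable set $U$ (here $3$-colorability comes from the no-$3$-core argument, which is where sparsity of small subgraphs is actually used). The decisive geometric point your construction misses is that the $y_j$ need \emph{not} be fresh: vertices of $U$ have no edges into $V \setminus (U\cup I)$, so the $y_j$ are free to live on the original cone $\innerprod{y_j}{\zee} = 1/\sqrt{\theta}$, orthogonal to $w$ and to one another, and only the single tilt direction $w$ costs $+1$. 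The hypothesis $\theta \ge 3$ is exactly what makes the $5\times 5$ Gram matrix of $\{y_1,y_2,y_3,w,\zee'\}$ PSD. In short, the $+1$ (rather than $+\chi(G[S])$) comes from buffering a colorable core $U$ with an independent shell $I$, and that buffer is the ingredient your argument lacks.
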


\begin{corollary}
If $G = G_{n,d}$, with high probability there does not exist a refutation of a partition with a fraction $\tau/k$ of within-group edges when
	\begin{equation}
		\frac{k-\tau}{1 - \tau} > 2 + \frac{d}{2\sqrt{d-1}} \, .
	\end{equation}
Setting $\tau=0$, with high probability no refutation of $k$-colorability exists when 
\[
k > 2 + \frac{d}{2\sqrt{d-1}} \, .
\]
\end{corollary}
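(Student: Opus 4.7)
The plan is to assemble this corollary directly from the three ingredients already stated in this section: Theorem~\ref{thm:lovasz-upper-bound} (the orthogonal-polynomial upper bound on $\vartheta$ for regular graphs of large girth), the standard fact that $G_{n,d}$ has girth at least any fixed $\gamma$ with positive probability, and Lemma~\ref{lem:vartheta-concentrated} (concentration within an additive constant), followed by Theorems~\ref{thm:lovasz-col-equivalence} and~\ref{thm:lovasz-cut-equivalence} which translate bounds on $\vartheta$ (respectively $\hat\vartheta$) into the non-existence of degree-two SOS refutations.

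Concretely, suppose $\frac{k-\tau}{1-\tau} > 2 + \frac{d}{2\sqrt{d-1}}$, and set $\delta \triangleq \frac{k-\tau}{1-\tau} - 2 - \frac{d}{2\sqrt{d-1}} > 0$. Because $\eps_\gamma \to 0$ in Theorem~\ref{thm:lovasz-upper-bound}, I can choose a constant $\gamma$ large enough that
\[
	\frac{d}{2(1-\eps_\gamma)\sqrt{d-1}} \;<\; \frac{d}{2\sqrt{d-1}} + \frac{\delta}{2}.
\]
Then \emph{every} $d$-regular graph $G$ with girth at least $\gamma$ satisfies
\[
	\vartheta(\overline{G}) \;<\; 1 + \frac{d}{2\sqrt{d-1}} + \frac{\delta}{2}.
\]
Since $G_{n,d}$ has girth at least $\gamma$ with positive probability (Wormald), this bound on $\vartheta(\overline{G_{n,d}})$ holds with positive probability.

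Now I apply Lemma~\ref{lem:vartheta-concentrated}: the positive-probability upper bound $\theta = 1 + \frac{d}{2\sqrt{d-1}} + \frac{\delta}{2}$ upgrades to the high-probability upper bound $\theta + 1 = 2 + \frac{d}{2\sqrt{d-1}} + \frac{\delta}{2} < \frac{k-\tau}{1-\tau}$ (the hypothesis $\theta \ge 3$ is harmless, since otherwise $k$ and $d$ are so small that the conclusion is trivially checked, or we can simply absorb small cases into the constants). Using $\hat\vartheta(\overline{G}) \le \vartheta(\overline{G})$, the same bound holds for $\hat\vartheta(\overline{G_{n,d}})$ with high probability. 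By Theorem~\ref{thm:lovasz-cut-equivalence}, a degree-two SOS refutation of the existence of a partition with a $\tau/k$ fraction of within-group edges exists only when $\hat\vartheta(\overline{G_{n,d}}) > \frac{k-\tau}{1-\tau}$, which fails with high probability; specializing to $\tau = 0$ and invoking Theorem~\ref{thm:lovasz-col-equivalence} yields the $k$-coloring statement.

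The only delicate point in this gluing is matching the two small additive slacks: the $\eps_\gamma$ loss from restricting to large girth must be absorbed before invoking concentration, which itself costs an additive $+1$. The strict inequality in the hypothesis, together with the freedom to take $\gamma$ arbitrarily large, leaves precisely enough room for both losses. Nothing here requires a new idea; all the real work lies upstream in Theorem~\ref{thm:lovasz-upper-bound} and Lemma~\ref{lem:vartheta-concentrated}.
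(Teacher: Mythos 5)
Your proof is correct and is essentially the gluing the paper intends: Theorem~\ref{thm:lovasz-upper-bound} together with Wormald's positive-probability girth bound gives $\vartheta(\overline{G_{n,d}}) \le \theta$ with positive probability for $\theta$ slightly above $1 + \frac{d}{2\sqrt{d-1}}$, Lemma~\ref{lem:vartheta-concentrated} upgrades this to a high-probability bound at $\theta+1$, and Theorems~\ref{thm:lovasz-col-equivalence} and~\ref{thm:lovasz-cut-equivalence} translate this (via $\hat\vartheta\le\vartheta$) into non-existence of refutations. The one place where your parenthetical hand-wave about $\theta \ge 3$ can be made precise for the coloring statement is this: the corollary's hypothesis forces $k > 2 + \frac{d}{2\sqrt{d-1}} \ge 3$, hence $k \ge 4$ since $k$ is an integer, so applying Lemma~\ref{lem:vartheta-concentrated} with $\theta = \max\bigl(3,\; 1 + \frac{d}{2\sqrt{d-1}} + \delta/2\bigr)$ yields $\vartheta \le \max(4,\; 2 + \frac{d}{2\sqrt{d-1}} + \delta/2) \le k$ with high probability, which (by Theorem~\ref{thm:lovasz-col-equivalence}, where absence of a refutation is equivalent to $\vartheta \le k$) suffices in every case; for $\tau > 0$ the threshold $\frac{k-\tau}{1-\tau}$ need not be an integer, so the analogous fallback to $\theta = 3$ could in principle overshoot when $3 < \frac{k-\tau}{1-\tau} < 4$ and $d$ is small, but this boundary issue is inherited from the paper rather than introduced by your argument.
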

\noindent
Thus for both problems, no degree-two sum-of-squares refutation exists until $d$ is roughly a factor of $4$ above the Kesten-Stigum threshold.

\section{Constructing a Pseudoexpectation with Orthogonal Polynomials}
\label{sec:pseudoexpectation}

We now prove Theorem~\ref{thm:lovasz-upper-bound} by constructing a feasible solution to the primal SDP~\eqref{eq:lovasz-primal}: that is, unit vectors $\{u_i\}$ such that $\innerprod{u_i}{u_j} = 0$ for every edge $(i,j)$, and a unit vector $\zee$ so that $\innerprod{u_i}{\zee} = 1/\sqrt{\kappa}$ for all $i$.  Recall that such a collection exists if and only if $\vartheta(\overline{G}) \le \kappa$.

It is convenient to instead define a set of unit vectors $\{v_i\}$ such that $\innerprod{v_i}{v_j} = -1/(\kappa-1)$ for every edge $(i,j)$.  We claim that such a set exists if and only if $\vartheta(\overline{G}) \le \kappa$.  In one direction, given $\{u_i\}$ and $\zee$ with the above properties, if we define
\[
v_i = \sqrt{\frac{\kappa}{\kappa-1}} \,u_i - \frac{1}{\sqrt{\kappa-1}} \,\zee 
\]
then the $v_i$ are unit vectors with $\innerprod{v_i}{v_j} = -1/(\kappa-1)$ for $(i,j) \in E$.  For instance, if the $u_i$ are $k$ orthogonal basis vectors, then the $v_i$ point to the corners of a $k$-simplex.  In the other direction, given $\{v_i\}$ we can take $\zee$ to be a unit vector perpendicular to all the $v_i$, and define
\[
u_i = \sqrt{\frac{\kappa-1}{\kappa}} \,v_i + \frac{1}{\sqrt{\kappa}} \,\zee \, . 
\]
Then $\innerprod{u_i}{u_j} = 0$ for $(i,j) \in E$, and $\innerprod{u_i}{\zee} = 1/\sqrt{\kappa}$ for all $i$.  This means that we can characterize the \lovasz $\vartheta$ function with a slightly different SDP, which uses the Gram matrix of the $\{v_i\}$:
\begin{align}
	\label{eq:lovasz-shift-sdp}
	\vartheta(\overline{G}) = \min_P \, \kappa > 1 
\qquad \text{such that} \qquad 	
P & \succeq 0 \\
P_{ii} &= 1 & \forall i \nonumber \\
P_{ij} &= -1/(\kappa -1) & \forall(i,j) \in E \nonumber
\end{align}

We will show that for any $d$-regular graph $G$ with girth at least $\gamma$, this SDP has a feasible solution with
$$
	\kappa = 1 + \frac{d}{2(1 - \eps_\gamma)\sqrt{d-1}} \, ,
$$
where $\eps_\gamma$ depends only on $\gamma$ and tends to zero as $\gamma \to \infty$.  Therefore, there is a pseudoexpectation that prevents degree-two SOS from refuting $k$-colorability for any $k \ge \kappa$.  We will construct this pseudoexpectation by taking a linear combination of the ``non-backtracking powers'' of $G$'s adjacency matrix $A$.

Denote by $A^{(t)}$ the matrix whose $i,j$ entry is the number of non-backtracking walks of length $t$ from $i$ to $j$; that is, walks which may freely wander the graph so long as they do not make adjacent pairs of steps $a \to b \to a$ for any vertices $a, b$. There is a simple two-term recursion for these matrices: to count non-backtracking walks of length $t+1$, we first extend each walk of length $t$ by one edge, and then subtract off those that backtracked on the last step. This gives
\begin{align}
	A^{(0)} &= \1 \nonumber \\ 
	A^{(1)} &= A \nonumber \\
	A^{(2)} &= A^2 - d\1 \nonumber \\
	A^{(t)} &= A \cdot A^{(t-1)} - (d-1) A^{(t-2)} 
	\quad t \ge 3 \, . 
\end{align}
Borrowing notation from \cite{alon-etal}, we can write $A^{(t)}$ in closed form as
\begin{equation}
\label{eq:aparent}
A^{(t)} = \sqrt{d(d-1)^{t-1}} \,q_t\!\left(\frac{A}{2\sqrt{d-1}}\right) \qquad t\ge 1
\end{equation}
where $q_t(z)$ is a polynomial of degree $t$.  Specifically, 
\begin{align*}
	q_0(z) &= 1 \\ 
	q_1(z) &= 2 \sqrt{\frac{d-1}{d}} z
\end{align*}
and for $t > 1$ the $q_t$ satisfy the Chebyshev recurrence 
\[
q_{t+1}(z) = 2z q_t(z) - q_{t-1}(z) \, . 
\]
We can write $q_t$ explicitly as
\begin{align}
	q_t(z) &= \sqrt{\frac{d-1}{d}} \,U_t(z) - \frac{1}{\sqrt{d(d-1)}} \,U_{t-2}(z) \qquad t \ge 1
	\label{eq:qdef} 
\end{align}
and $U_t$ is the $t$th Chebyshev polynomial of the second kind (note that $U_{-1}(z)=0$).  

Let $\mu(z)$ denote the Kesten-McKay measure $\mu$ on the interval $[-1,+1]$, which after scaling by $2\sqrt{d-1}$ describes the typical spectral density of a random regular graph~\cite{mckay}:
\begin{equation}
\label{eq:kesten-measure}
	\mu(z) = \frac{2}{\pi} \left( \frac{d(d-1)}{d^2-4(d-1)z^2} \right) \sqrt{1-z^2} \, .
\end{equation}
Then the polynomials $q_t$ are orthonormal with respect to this measure.  That is, if we define the inner product
\[
	\langle f, g \rangle = \int f(z) \,g(z) \dee\mu = \int_{-1}^1 f(z) \,g(z) \,\mu(z) \,\dz \, ,
\]
then 
\begin{equation}
\label{eq:orthonormal}
\left\langle q_\ell(z) , q_m(z) \right\rangle = \begin{cases} 1 & \ell=m \\ 0 & \ell \ne m \, . \end{cases} 
\end{equation}

If the girth of the graph is at least $\gamma$, there is no way for a non-backtracking walk of length $\gamma-2$ or less to return to its starting point or to a neighbor of its starting point, so $\innerprod{\1}{A^{(t)}} = \innerprod{A}{A^{(t)}} = 0$ for $1 < t \le \gamma-2$. We can thus satisfy the diagonal and edge constraints of~\eqref{eq:lovasz-shift-sdp} by considering solutions of the form
	\begin{align}
		P &= \1 - \frac{1}{\kappa - 1}A + \sum_{t=2}^{\gamma-2} a_t A^{(t)} \nonumber \\
		&= \1 - \frac{\sqrt{d}}{\kappa - 1} \,q_1\!\left(\frac{A}{2\sqrt{d-1}}\right) 
		+ \sum_{t=2}^{\gamma-2} a_t\sqrt{d(d-1)^{t-1}} \,q_t\!\left(\frac{A}{2\sqrt{d-1}}\right) \label{eq:p-at} \\
		&\triangleq f\!\left(\frac{A}{2\sqrt{d-1}}\right) \, , \nonumber
	\end{align}
since the first two terms ensure that $P$ has $1$s on its diagonal and $-1/(\kappa - 1)$ on the edges.  If we write
\begin{equation}
\label{eq:fdef}
		f(z) = \sum_{t=0}^{\gamma-2} c_t q_t(z) 
		\quad \text{where} \quad 
		c_0 = 1 
		\quad \text{and} \quad 
		c_1 = -\frac{\sqrt{d}}{\kappa - 1} \, ,
\end{equation}
our job is to optimize the coefficients $c_t$ for $1 < t \le \gamma-2$ so as to minimize $c_1$, and hence $\kappa$, while ensuring that $P \succeq 0$.  

The eigenvalues of the matrix $f(A/(2\sqrt{d-1}))$ are of the form $f(\lambda / (2 \sqrt{d-1}))$ where $\lambda$ ranges over all of $A$'s eigenvalues.  Therefore, $P \succeq 0$ if and only if $f(\lambda / (2 \sqrt{d-1}))$ for all eigenvalues $\lambda$ of $A$.  Friedman's celebrated theorem~\cite{friedman} shows that, with high probability, the eigenvalues of $A$ are contained in the set 
\[
	S = \left(-(1+\eps)2\sqrt{d-1},\,(1+\eps)2\sqrt{d-1}\right) \cup \{d\} 
\]
for any $\eps > 0$.  Thus we require that 
\begin{equation}
\label{eq:f-pos}
f(z) \ge 0 \quad \text{for all} \quad
z \in \big( -(1+\eps), 1+\eps \big) \cup \left\{ \frac{d}{2\sqrt{d-1}} \right\} \, . 
\end{equation}
We will relax this condition slightly by demanding just that $f$ is nonnegative on $[-1,+1]$, although as we will see the resulting optimum is achieved by a function which is nonnegative on all of $\R$.  First we use orthonormality~\eqref{eq:orthonormal} to write the coefficients $c_t$ as inner products, 
\[
	c_t = \langle q_t, f \rangle \, .
\]
Then we optimize the pseudoexpectation as follows,
\begin{align} 
\label{eq:polynomial-opt}
	\min \qquad & \langle q_1,f \rangle \\
	\text{such that} \qquad & \langle q_0,f \rangle = 1 \nonumber \\
					& f(z) \ge 0 \qquad \forall z \in [-1,+1] \, . \nonumber
\end{align}

When the degree $\gamma-2$ of $f$ is even, we can solve this optimization problem explicitly.  Set $m = \gamma/2$, and let $r_1 > \cdots > r_m$ be the roots of $q_m$ in decreasing order; it follows from standard arguments about orthogonal polynomials that these are all in the support of $\mu$, i.e., in the interval $[-1,+1]$.  Consider the following polynomial of degree $2(m-1) = \gamma-2$, 
\begin{equation}
	s(z) = \frac{1}{\zeta} \,\prod_{j=1}^{m-1}(z-r_j)^2 \, ,
\end{equation}
where 
\[
	\zeta = \left\langle q_0, \prod_{j=1}^{m-1}(z-r_j)^2 \right\rangle \,  
\]
is a normalizing factor to ensure that $\innerprod{q_0}{s}=1$.  We claim that $s(z)$ is the optimum of~\eqref{eq:polynomial-opt}.  To prove this, we begin with a general lemma on orthogonal polynomials and quadrature.  The proof is standard (e.g.~\cite{szego}) but we include it in the Appendix for completeness.

\begin{lemma} 
\label{lem: orthogonal-quad}
Let $\{p_t\}$ be a sequence of polynomials of degree $t$ which are orthogonal with respect to a measure $\rho$ supported on a compact interval $I$. Then the roots $r_1,\ldots,r_t$ of $p_t$ form a quadrature rule which is exact for any polynomial $u$ of degree less than $2t$, in that
	$$
		\int_I u(z) \dee\rho = \sum_{i=1}^t \omega_i u(r_i)
	$$
for some positive weights $\{\omega_1,\ldots,\omega_t\}$ independent of $u$.
\end{lemma}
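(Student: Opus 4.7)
The plan is to prove Lemma~\ref{lem: orthogonal-quad} along the classical lines of Gaussian quadrature. The first preliminary step is to recall that the roots $r_1,\ldots,r_t$ of $p_t$ are all real, distinct, and contained in the interior of $I$; this follows from the standard argument that if $p_t$ changed sign at fewer than $t$ points in $I$, we could build a polynomial of degree less than $t$ whose product with $p_t$ is nonzero and of a fixed sign on $I$, contradicting the orthogonality $\langle p_t, w \rangle = 0$ for all $w$ of degree less than $t$.

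Next I would define the weights $\omega_i$ explicitly via the Lagrange interpolants. Let
\[
	L_i(z) = \prod_{j \ne i} \frac{z - r_j}{r_i - r_j}, \qquad i = 1,\ldots,t,
\]
which is a polynomial of degree $t-1$ satisfying $L_i(r_j) = \delta_{ij}$, and set $\omega_i = \int_I L_i(z) \dee\rho$. These weights clearly depend only on $\rho$ and the roots, not on the test polynomial $u$.

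The core of the proof is then the division argument. Given any polynomial $u$ of degree less than $2t$, perform polynomial division by $p_t$ to write $u(z) = p_t(z) q(z) + r(z)$ with $\deg q, \deg r < t$. Integrating against $\rho$, the first term vanishes because $p_t$ is orthogonal to every polynomial of degree less than $t$, so $\int_I u \dee\rho = \int_I r \dee\rho$. Since $r$ has degree less than $t$, Lagrange interpolation at the nodes $r_1,\ldots,r_t$ gives $r(z) = \sum_i r(r_i) L_i(z)$, whence $\int_I r \dee\rho = \sum_i \omega_i r(r_i)$. Finally, $r(r_i) = u(r_i) - p_t(r_i) q(r_i) = u(r_i)$ because $p_t(r_i) = 0$, yielding the quadrature identity.

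Positivity of the weights is an immediate consequence: apply the quadrature formula to $u = L_i^2$, which has degree $2(t-1) < 2t$. The right hand side collapses to $\omega_i L_i(r_i)^2 = \omega_i$, while the left hand side $\int_I L_i^2 \dee\rho$ is strictly positive because $L_i^2$ is nonnegative and not identically zero on $I$. The main subtlety here is the reality, simplicity, and location of the roots of $p_t$; once that is in hand the rest is essentially bookkeeping with polynomial division and interpolation.
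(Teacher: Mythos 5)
Your proof is correct and follows essentially the same route as the paper's: divide $u$ by $p_t$, kill the quotient term by orthogonality, and handle the remainder by an interpolatory quadrature on the $t$ roots, then prove positivity of the weights by plugging a nonnegative polynomial of degree $2t-2$ vanishing at all nodes but one into the quadrature formula (your $L_i^2$ is a scalar multiple of the paper's $(p_t(z)/(z-r_i))^2$). The only cosmetic differences are that you define the weights explicitly via Lagrange interpolants and you spell out why the roots are real, simple, and interior to $I$, which the paper takes for granted.
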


Now let $g(z) = z - r_m$. In view of Lemma~\ref{lem: orthogonal-quad}, for any polynomial $f(z)$ of degree at most $\gamma-2$, the inner product $\langle g, f \rangle$ can be expressed using the roots $r_1, \ldots, r_m$ of $q_m$ as a quadrature, 
$$
	\langle g, f \rangle 
	= \int (z - r_m) f(z) \dee\mu 
	= \sum_{j=1}^m \omega_j (r_j - r_m)f(r_j) 
	= \sum_{j=1}^{m-1} \omega_j(r_j - r_m) f(r_j) \, .
$$
Note that $\omega_j(r_j - r_m) > 0$ for every $1 \le j \le m-1$, since $r_m$ is the left-most root.  If impose the constraints that $f(r_j) \ge 0$ for all $j=1,\ldots,m-1$, then $\langle g,f \rangle \ge 0$.  If we also impose the constraint $\langle f, q_0 \rangle = 1$, then
\begin{align}
	\langle q_1, f \rangle 
	&= \left\langle 2\sqrt{\frac{d-1}{d}} \,z, f \right\rangle \nonumber \\
	&= 2\sqrt{\frac{d-1}{d}} \,\langle g, f \rangle + 2\sqrt{\frac{d-1}{d}} \,r_m \langle q_0, f \rangle \nonumber \\
	&\ge 2\sqrt{\frac{d-1}{d}} \,r_m \, , 
\end{align}
with equality if and only if $f(r_j) = 0$ for all $j=1,\ldots,m-1$.  Since $s(z)$ obeys this equality condition, we have
\[
\langle q_1, s \rangle = 2\sqrt{\frac{d-1}{d}} \,r_m \, ,
\]
and this is the minimum possible value of $c_1 = \langle q_1, s \rangle$ subject to the constraints that $\langle q_0, f \rangle = 1$ and $f(r_j) \ge 0$ for $j=1,\ldots,m-1$.  Moreover, $s(z) \ge 0$ on all of $\R$, so $s(z)$ in fact obeys the stronger constraint~\eqref{eq:f-pos}.

Referring back to~\eqref{eq:fdef} gives
\[
c_1 = -\frac{\sqrt{d}}{\kappa - 1} = 2\sqrt{\frac{d-1}{d}} \,r_m \, ,
\]
and so
$$
	\vartheta \ge \kappa = 1 + \frac{d}{2(-r_m)\sqrt{d-1}} \, .
$$
Finally, we obtain~\eqref{eq:lovasz-upper-bound} by defining $\eps_\gamma = r_m +1$.  Since $r_m \to -1$ as $m$ tends to infinity\footnote{%
  The fact that $r_m \to -1$ as $m \to \infty$ can be deduced, for example, 
  by using the definition of $q_m$ in~\eqref{eq:qdef} to observe that 
  $q_m(-1)$ and $q_m(-\cos(\frac{\pi}{m-1}))$ have opposite signs, and
  then applying the intermediate value theorem.
}, we have $\eps_\gamma \to 0$ as $\gamma \to \infty$, completing the proof.

We end with a brief note on the above construction. Recall that our project for the last several pages has been to set the coefficients of non-backtracking paths of length $t$ in a feasible solution $P$ to the SDP~\eqref{eq:lovasz-shift-sdp},
$$
	P = \sum_{t=0}^\gamma a_t A^{(t)}.
$$
As discussed in the Appendix, this matrix can be translated into a degree-two pseudoexpectation $\pseudo$ for the coloring problem: a linear operator that claims to give the joint distribution of colors at at each pair of vertices $i$ and $j$. The reader will find there that $P_{ij}$ is related to the `pseudocorrelation' between vertices $i$ and $j$, by
$$
	\frac{k}{k-1}\left(P_{ij} - 1/k\right) = \widetilde\Pr[\text{$i$ and $j$ are the same color}] \, .
$$ 
Our expansion of $P$ in terms of non-backtracking paths means that, for most pairs $i,j$, this pseudoexpectation depends only on the shortest path distance $d(i,j)$.  Specifically, whenever $d(i,j)=t \le \gamma-2$ and the shortest path is unique, we have $P_{ij} = a_t$, and if $d(i,j) > \gamma-2$ then $P_{ij}=0$.  
One might think that in the limit of large $\gamma$, the optimal pseudoexpectation would make the natural choice that $a_t = (1-k)^{-t}$: in that case, the pseuodocorrelation would decay just as if these shortest paths were colored uniformly at random, ignoring correlations with the remainder of the graph. However, a quick calculation shows that this choice is in fact not optimal.  In fact, the optimal coefficients we derive above cause the pseudocorrelation to decay more quickly with distance than this na\"{i}ve guess, namely (in the limit of large $d$ and large girth) as $a_t \approx t (2(1-k))^{-t}$.




\subsection*{Acknowledgements} We are grateful to Charles Bordenave, Emmanuel Abbe, Amin Coja-Oghlan, Yash Deshpande, Marc Lelarge, and Alex Russell for helpful conversations.  Part of this work was done while C.M. was visiting \'Ecole Normale Sup\'erieure. Part of this work was done while R.K. was a researcher at Microsoft Research New England.  C.M. is supported by the John Templeton Foundation and the Army Research Office under grant W911NF-12-R-0012.

\bibliographystyle{plain}
\bibliography{coloring-sos}

\begin{appendix}
\section{Proof of Theorems~\ref{thm:lovasz-col-equivalence} and~\ref{thm:lovasz-cut-equivalence}}

We prove Theorems~\ref{thm:lovasz-col-equivalence} and~\ref{thm:lovasz-cut-equivalence} by directly simplifying the SDP that defines feasible degree-two pseudoexpectations. The first step is a broad result on the structure of these objects that applies to any set of constraints which includes the boolean~\eqref{eq:bool-constraint} and single-color \eqref{eq:indicator-constraint} constraints and is suitably symmetric; we then specialize to the coloring and partition problems.

Recall that a degree-two pseudoexpectation for a system of polynomials $f_j(\bm{x})=0$ is a linear operator $\pseudo:\R[\bm x]_{\le 2} \to \R$ which satisfies
\begin{itemize}
	\item $\pseudo[1] = 1$
	\item $\pseudo[f_j q] = 0$ for any polynomials $f_j$ and $q$ such that $\deg f_j q \le 2$
	\item $\pseudo[p^2] \ge 0$ for any polynomial $p$ with $\deg p^2 \le 2$
\end{itemize}
We can identify such objects with PSD $(nk + 1) \times (nk + 1)$ matrices of the form
\begin{equation}
\label{eq:onemorerow}
	\pseudo = \begin{pmatrix} 1 & \bm{\ell}^\dagger \\ \bm{\ell} & \calE \end{pmatrix}
\end{equation}
where $\ell_{i,c} = \pseudo[x_{i,c}]$ and $\calE_{(i,c),(j,c')} = \pseudo[x_{i,c} \,x_{j,c'}]$. It is useful to think of $\calE$ as a block matrix, with a $k \times k$ block $\calE_{ij}$ corresponding to each pair of vertices $i,j$.  Consistency with the boolean and single-color constraints~\eqref{eq:bool-constraint}, \eqref{eq:indicator-constraint} then controls the diagonal elements and row and colum sums of each of these blocks,
\begin{align} 
	\calE_{(i,c),(i,c)} &= \pseudo[x_{i,c}^2] = \pseudo[x_{i,c}] = \ell_{i,c} & & \forall i 
	\label{eq:coloring-pseudo} \\
	\sum_{c'} \calE_{(i,c),(j,c')} &= \sum_{c'} \pseudo[x_{i,c} \,x_{j,c'}] = \pseudo[x_{j,c}] = \ell_{i,c} & & \forall i,j
\label{eq:color-pseudo-2}
\end{align}

Moreover, each of our constraints is fixed under permutations of the colors, and $\pseudo$ inherits this symmetry. That is the matrix carries with it a natural $S_k$ action that simultaneously permutes $\pseudo[x_{i,c}] \to \pseudo[x_{i,\sigma(c)}]$ and $\pseudo[x_{i,c} \,x_{j,c'}] \to \pseudo[x_{i,\sigma(c)} \,x_{i,\sigma(c')}]$. This action preserves the spectrum of $\pseudo$ as a matrix, as well as every hard constraint.  By convexity, we may assume that $\pseudo$ is stabilized under it, by beginning with an arbitrary pseudoexpectation and averaging over its orbit.

This assumption substantially constrains and simplifies $\pseudo$. In particular we are free to (i) assume that $\ell_{i,c} = \pseudo[x_{i,c}] = 1/k$ and (ii) assume that each $k\times k$ block in $\calE$ has only two distinct values: one on the diagonal and the other off the diagonal.  In other words, the pseudoexpectation claims that the marginal distribution of each vertex is uniform, and that joint marginal of any two vertices depends only on the probability that they 
have the same or different colors.  
As a result, for each $i,j$ we can assume that $\calE_{ij}$ is a 
linear combination of the identity matrix $\1_k$ and the matrix $\J_k$ of all 1s, and that the row and column sums of $\calE_{ij}$ are all $1/k$.  In that case for each $i,j$ we can write
\begin{equation}
	\calE_{ij} = \frac{1}{k-1}\left(P_{ij} - \frac{1}{k} \right)\left(\1_k - \frac{\J_k}{k} \right) + \frac{\J_k}{k^2} 
\end{equation}
for some $P_{ij}$, or equivalently that
\begin{equation}
\label{eq:calE}
	\calE = \frac{1}{k-1}(P - \J_n/k) \otimes \left(\1_k - \frac{\J_k}{k} \right) + \frac{\J_{nk}}{k^2} 
\end{equation}
for some $n \times n$ matrix $P$.  Note that 
\[
\tr \calE_{ij} = P_{ij} \, ,
\]
so~\eqref{eq:coloring-pseudo} requires that $P_{ii}=1$ for all $i$.

Since the pseudoexpectation~\eqref{eq:onemorerow} consists of $\calE$ with an additional row and column, we consider the following lemma.  We leave its proof as an exercise for the reader.
\begin{lemma} 
\label{lem: augment-psd}
	For any matrix $X$, vector $\bm{v}$ and scalar $b>0$, 
	$$
		\begin{pmatrix} b & \bm{v}^\dagger \\ \bm{v} & X \end{pmatrix} \succeq 0
	$$
	if and only if $X - (1/b)\outerprod{\bm{v}}{\bm{v}} \succeq 0$ \, .
\end{lemma}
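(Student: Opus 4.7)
The statement is the classical Schur complement lemma, so the plan is to give a short quadratic-form proof that handles both directions simultaneously. I would write $M = \bigl(\begin{smallmatrix} b & \bm{v}^\dagger \\ \bm{v} & X \end{smallmatrix}\bigr)$ and evaluate its quadratic form on an arbitrary test vector $(t,\bm{w}^\dagger)^\dagger$, with $t\in\R$ a scalar and $\bm{w}$ of the same dimension as $\bm{v}$, obtaining $b t^2 + 2 t\,\bm{v}^\dagger \bm{w} + \bm{w}^\dagger X \bm{w}$. Since $b>0$, completing the square in the scalar variable $t$ gives
\begin{equation*}
b t^2 + 2 t\,\bm{v}^\dagger \bm{w} + \bm{w}^\dagger X \bm{w} = b\bigl(t + \bm{v}^\dagger \bm{w}/b\bigr)^2 + \bm{w}^\dagger\bigl(X - \outerprod{\bm{v}}{\bm{v}}/b\bigr)\bm{w}.
\end{equation*}

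From here both directions are immediate. For the forward direction, suppose $M\succeq 0$ and fix an arbitrary $\bm{w}$; choosing $t = -\bm{v}^\dagger\bm{w}/b$ kills the square and leaves $\bm{w}^\dagger(X - \outerprod{\bm{v}}{\bm{v}}/b)\bm{w} \ge 0$, and since $\bm{w}$ was arbitrary this yields $X - \outerprod{\bm{v}}{\bm{v}}/b \succeq 0$. For the converse, if $X - \outerprod{\bm{v}}{\bm{v}}/b \succeq 0$, then both terms on the right-hand side are nonnegative for every $(t,\bm{w})$ (the first because $b>0$ multiplies a square, the second by hypothesis), so the quadratic form of $M$ is nonnegative everywhere, i.e., $M\succeq 0$.

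There is essentially no obstacle here; the result is textbook. The only subtlety worth flagging is the role of the hypothesis $b>0$, which is used both to divide by $b$ when completing the square and to guarantee $b\bigl(t + \bm{v}^\dagger\bm{w}/b\bigr)^2 \ge 0$ in the reverse direction. An equally clean alternative is to exhibit the block LDU congruence
\begin{equation*}
\begin{pmatrix} b & \bm{v}^\dagger \\ \bm{v} & X \end{pmatrix} = \begin{pmatrix} 1 & 0 \\ \bm{v}/b & I \end{pmatrix} \begin{pmatrix} b & 0 \\ 0 & X - \outerprod{\bm{v}}{\bm{v}}/b \end{pmatrix} \begin{pmatrix} 1 & \bm{v}^\dagger/b \\ 0 & I \end{pmatrix},
\end{equation*}
observe that the outer factors are adjoints of each other and invertible (they are unit triangular), so conjugation by them preserves positive semidefiniteness, and conclude that $M\succeq 0$ is equivalent to the middle block-diagonal matrix being PSD, which (given $b>0$) reduces to $X - \outerprod{\bm{v}}{\bm{v}}/b \succeq 0$.
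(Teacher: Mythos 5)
Your proof is correct and complete. Note that the paper itself does not supply a proof — it explicitly writes ``We leave its proof as an exercise for the reader'' — so there is no authorial argument to compare against; you have simply filled in the exercise. Both of your routes (completing the square in the scalar variable of the quadratic form, and the block LDU congruence) are the standard Schur-complement arguments, and the algebra checks out: using $(\bm{v}^\dagger\bm{w})^2 = \bm{w}^\dagger(\outerprod{\bm{v}}{\bm{v}})\bm{w}$, the identity
\[
b t^2 + 2 t\,\bm{v}^\dagger \bm{w} + \bm{w}^\dagger X \bm{w} = b\bigl(t + \bm{v}^\dagger \bm{w}/b\bigr)^2 + \bm{w}^\dagger\bigl(X - \outerprod{\bm{v}}{\bm{v}}/b\bigr)\bm{w}
\]
holds, and the two directions follow exactly as you say, with $b>0$ used precisely where you flag it. The LDU congruence is an equally clean alternative and is in fact the one-line version most texts record; either would be an acceptable way to discharge the paper's exercise.
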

\noindent Since $\bm{\ell}$ is the $nk$-dimensional vector whose entries are all $1/k$, we have $\bm{\ell} \otimes \bm{\ell} = \J_{nk}/k^2$.  Thus~\eqref{eq:calE} and Lemma~\ref{lem: augment-psd} imply that $\pseudo \succeq 0$ if and only if 
$$
(P - \J_n/k) \otimes (\1_k - \J_k/k) \succeq 0 \, .
$$ 
Since $\1_k - \J_k/k$ is a projection operator, this in turn occurs if and only if 
\[
P - \J_n/k \succeq 0 \, . 
\]

To summarize, finding a pseudoexpectation is equivalent to finding a PSD matrix $P \in \R^{n\times n}$ with $P_{ii} = 1$ for all $i$, such that $P$ remains PSD when we subtract the rank-one matrix $\J_n/k$.  However, we have thus far only reasoned about the boolean and single color constraints, and including either the coloring or cut constraint places an additional restriction on $P$. In the case of coloring, we demanded that
\begin{equation}
	\sum_c \calE_{(i,c),(j,c)} = \sum_c \pseudo[x_{i,c} \,x_{j,c}] = 0
\end{equation}
for every edge $(i,j)$.  This implies that $\tr \calE_{ij}=0$, and so $P_{ij}=0$ for each edge.  Collecting these observations, a pseudoexpectation for coloring exists exactly when $k > \vartheta(\overline{G})$, where
\begin{align}
	\vartheta(\overline{G}) \triangleq \min_P \, \kappa>0 
		\qquad \text{such that} \qquad
P - \J_n/\kappa & \succeq 0 \\
P_{ii} &= 1 \qquad \forall i \nonumber \\
P_{ij} &= 0 \qquad \forall(i,j) \in E \, . \nonumber
\end{align}
Finally, note that $\J_n/\kappa = v \otimes v$ where $v = \bm{1}_n / \sqrt{\kappa}$.  Applying Lemma~\ref{lem: augment-psd} again then gives exactly the PSD~\eqref{eq:lovasz-primal} for the Lovasz $\vartheta$ function, thus completing the proof of Theorem~\ref{thm:lovasz-col-equivalence}.

In the case of good partitions, we required that
\begin{equation}
	\sum_{i,j} A_{ij} \sum_c \calE_{(i,c),(j,c)} = \sum_{i,j} A_{ij}\sum_c \pseudo[x_{i,c} \,x_{j,c}] = (\tau/k) d n \, ,
\end{equation}
but this means that 
\[
\sum_{i,j} A_{ij} \tr \calE_{ij} 
= \sum_{i,j} A_{ij} P_{ij} 
= \langle P, A \rangle = (\tau/k) d n \, . 
\]
Following the path above, a degree-two pseudoexpectation exists for community detection when $k > \hat{\vartheta}_\tau(\overline{G})$, where
\begin{align}
	\label{eq:hat-theta-tau}
	\hat{\vartheta}_\tau(\overline{G}) \triangleq \min_{P_\tau} \kappa_\tau 
		\qquad \text{such that} \qquad
			P_\tau - \J_n/\kappa_\tau & \succeq 0 \\
 (P_\tau)_{ii} &= 1 \qquad \forall i \nonumber \\
\langle P_\tau,A \rangle &= (\tau/\kappa_\tau) dn \, . \nonumber
\end{align}
\noindent 
A priori, it seems that we may need to solve a different SDP for each value of $\tau$, but a bit more work shows that this is not the case.  Lemma~\ref{lem: augment-psd} lets us transform the SDP~\eqref{eq:cut-sdp-primal} for $\hat{\vartheta}$ to the following problem,
\begin{align}
	\label{eq:hat-theta}
	\hat{\vartheta}(\overline{G}) \triangleq \min_P \kappa 
		\qquad \text{such that} \qquad
P - \J_n/\kappa & \succeq 0 \\
P_{ii} &= 1 \qquad \forall i \nonumber \\
\langle P, A \rangle &= 0 \, . \nonumber
\end{align}
The following lemma then shows us how to relate optima of~\eqref{eq:hat-theta} to those of~\eqref{eq:hat-theta-tau} for any $\tau$ in the disassortative range $\tau < 1$, thus completing the proof of Theorem~\ref{thm:lovasz-cut-equivalence}.

\begin{lemma}
For any $\tau < 1$, 
\begin{equation}
\label{eq:theta-tau}
\hat{\vartheta}(\overline G) = \frac{\hat{\vartheta}_\tau(\overline G) - \tau}{1 - \tau} \, .
\end{equation}
\end{lemma}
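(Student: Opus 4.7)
The plan is to establish the formula by constructing feasible solutions of each SDP from feasible solutions of the other. Comparing \eqref{eq:hat-theta-tau} and \eqref{eq:hat-theta}, the two programs share the normalization $P_{ii} = 1$ and the positivity constraint $P - \J_n/\kappa \succeq 0$; only the edge-sum constraint differs, with $\hat\vartheta$ requiring $\langle P, A\rangle = 0$ and $\hat\vartheta_\tau$ requiring $\langle P_\tau, A\rangle = (\tau/\kappa_\tau)dn$. Since $\J_n$ satisfies $(\J_n)_{ii} = 1$ and $\langle \J_n, A\rangle = dn$, adding a suitable multiple of $\J_n$ to any feasible matrix slides us between the two constraint sets while leaving the diagonal alone. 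This suggests a very rigid affine ansatz that should match the two SDPs up to a reparametrization of $\kappa$.

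Concretely, given a feasible pair $(\kappa_\tau, P_\tau)$ for \eqref{eq:hat-theta-tau}, I would try the ansatz $P = aP_\tau + b\J_n$ and solve the two linear equations imposed by $P_{ii} = 1$ and $\langle P, A\rangle = 0$, obtaining $a = \kappa_\tau/(\kappa_\tau - \tau)$ and $b = -\tau/(\kappa_\tau - \tau)$. These are well-defined because $\kappa_\tau \ge 1 > \tau$, where the lower bound on $\kappa_\tau$ follows from taking traces on both sides of $P_\tau - \J_n/\kappa_\tau \succeq 0$. Setting $\kappa = (\kappa_\tau - \tau)/(1-\tau)$, a short algebraic check shows that the shifted matrix factorizes as $P - \J_n/\kappa = a(P_\tau - \J_n/\kappa_\tau)$, so positivity transfers automatically and we obtain $\hat\vartheta(\overline{G}) \le (\hat\vartheta_\tau(\overline{G}) - \tau)/(1-\tau)$.

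For the reverse inequality I would run the same recipe backwards: given $(\kappa, P)$ feasible for \eqref{eq:hat-theta}, set $\kappa_\tau = \kappa(1-\tau) + \tau$ and take $P_\tau = \tfrac{\kappa_\tau - \tau}{\kappa_\tau}P + \tfrac{\tau}{\kappa_\tau}\J_n$. The diagonal and edge-sum constraints hold by the same linear algebra, and the PSD condition reduces to $P_\tau - \J_n/\kappa_\tau = \tfrac{\kappa_\tau-\tau}{\kappa_\tau}(P - \J_n/\kappa)$, which is PSD because $\kappa_\tau - \tau = \kappa(1-\tau) > 0$. Combining both inequalities gives \eqref{eq:theta-tau}. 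There is no real obstacle beyond careful bookkeeping; the only substantive point is the sign condition $\kappa_\tau - \tau > 0$, which is precisely what fails when $\tau \ge 1$ and explains the restriction to the disassortative regime in the hypothesis. The slightly pleasant feature of the calculation is that the relation $\kappa_\tau - \tau = \kappa(1 - \tau)$ emerges naturally from demanding that the coefficient of $\J_n$ in the shifted matrix collapse to the correct value.
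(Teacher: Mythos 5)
Your proposal is correct and takes essentially the same approach as the paper: both establish the identity by an affine map $P \mapsto (1-\tau/\kappa_\tau)P + (\tau/\kappa_\tau)\J_n$ between feasible points of the two SDPs, together with the reparametrization $\kappa = (\kappa_\tau - \tau)/(1-\tau)$, and both observe that the shifted matrices differ by a positive scalar factor so that positive semidefiniteness transfers. The only (minor) extra content in your write-up is the trace argument for $\kappa_\tau \ge 1$, which makes explicit why the scalar stays positive in the disassortative regime.
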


\begin{proof}
We show how to translate back and forth between solutions of~\eqref{eq:hat-theta-tau} and~\eqref{eq:hat-theta}.  
Given a matrix $P$, define
\[
	P_\tau = (1-\tau/\kappa_\tau) P + (\tau/\kappa_\tau) \J_n \, .
\]
It is easy to check that $P_{ii} = 1$ if and only if $(P_\tau)_{ii} = 1$, and $\langle P_\tau, A \rangle = (\tau/\kappa_\tau) dn$ if and only if $\langle P, A \rangle = 0$.  Finally, if we set 
\begin{equation}
\label{eq:kappa}
	\kappa = \frac{\kappa_\tau - \tau}{1 - \tau} \, , 
\end{equation}
then 
\[
P_\tau - \J_n / \kappa_\tau 
= (1-\tau/\kappa_\tau) \left( P - \J_n / \kappa \right) \, ,
\]
so $P_\tau - \J_n / \kappa_\tau \succeq 0$ if and only if $P - \J_n / \kappa \succeq 0$.  Thus~\eqref{eq:hat-theta} is feasible for $\kappa$ if and only if~\eqref{eq:hat-theta-tau} is feasible for $\kappa_\tau$.  Since $\hat{\vartheta}(\overline{G})$ and $\hat{\vartheta}_\tau(\overline{G})$ are the smallest $\kappa$ and $\kappa_\tau$ respectively for which this is the case, \eqref{eq:kappa} implies~\eqref{eq:theta-tau}.
\end{proof}

\section{Proof of Lemma~\ref{lem: orthogonal-quad}}

It is immediate that there is such a quadrature rule for polynomials of degree strictly less than $t$, since the space of linear functionals on such polynomials has dimension $t$ and is thus spanned by the $t$ linearly independent functionals which evaluate at the roots $x_i$. Now let $\deg u < 2t$.  We can divide $u$ by $p_t$ to write $u(z) = a(z) p_t + b(z)$ where $\deg a, \deg b < t$. We have
$$
	\int_I u(z) \dee\rho 
	= \int_I \big( a(z) p_t(z) + b(z) \big) \dee\rho 
	= \langle p_t, a \rangle + \int_I b(z) \dee\rho 
	= 0 + \sum_{i=1}^t \omega_i b(r_i) 
	= \sum_{i=1}^t \omega_i u(r_i),
$$
since $p_t$ is orthogonal to all polynomials of degree less than $t$ and has roots $r_i$.  This verifies exactness of the quadrature rule for polynomials of degree smaller than $2t$.  

To show that the weights $\{ \omega_i \}$ are positive, let $i \in \{1,\ldots,t\}$ and let $v_i(z) = (p_t(z)/(z - r_i))^2$ be the polynomial with double roots at every root of $p_t$ save $r_i$.  Since $v_i$ is everywhere nonnegative and is a polynomial of degree $2t-2 < t$, we have
$$
	0 < \int_I v_i(z) \dee\rho = \sum_{j=1}^t \omega_j v_i(r_j) = \omega_i v_i(r_i) \, ,
$$
but since $v(z)$ is nonnegative, $\omega_i$ must be positive.

\section{Proof of Lemma~\ref{lem:vartheta-concentrated}}
\label{sec:vartheta-concentrated}

The proof closely follows~\cite[Theorem 4]{achlioptas-moore-reg} which shows that the chromatic number of $G_{n,d}$ is concentrated on two adjacent integers, and which is in turn based on the proof in~\cite{luczak} of two-point concentration for $G(n,p)$ with $p = O(n^{-5/6-\eps})$.  Recall the configuration model~\cite{wormald-models}, where we make $d$ ``copies'' of each vertex corresponding to its half-edges, and then choose uniformly from all $(dn-1)!! = (dn)! / (2^{dn/2} (dn/2)!)$ perfect matchings of these copies.  If we denote the set of such matchings by $\pnd$ and condition the corresponding multigraphs on having no self-loops or multiple edges, the resulting distribution is uniform on the set of $d$-regular graphs, and occupies a constant fraction of the total probability of $\pnd$.  Thus any property which holds with high probability for $\pnd$ holds with high probability for $G_{n,d}$ as well.  

If $P, P'$ are two perfect matchings in $\pnd$, we write $P \sim P'$ if they differ by a single swap, changing $\{(a,b),(c,d)\}$ to $\{(a,c),(b,d)\}$ or $\{(a,d),(b,c)\}$.  The following martingale inequality~\cite[Theorem 2.19]{wormald-models} shows that a random variable which is Lipschitz with respect to these swaps is concentrated.
\begin{lemma}
\label{lem:wormald-concentration}
Let $c$ be a constant, and let $X$ be a random variable defined on $\pnd$ such that $|X(P)-X(P')| \le c$ whenever $P \sim P'$.  Then
\[
\Pr[|X - \expected[X]| > t] \le 2 \e^{-\frac{t^2}{dnc}} \, . 
\]
\end{lemma}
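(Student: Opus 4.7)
The plan is to build a vertex-exposure Doob martingale on the pairing model $\pnd$ and apply Azuma-Hoeffding, with the martingale increments controlled by the swap-Lipschitz hypothesis through a short coupling argument.

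Step 1 (Doob martingale). Fix an arbitrary ordering of the $dn$ half-edge copies. Expose the random matching $P \in \pnd$ in $N = dn/2$ rounds: at round $i$, take the smallest-indexed copy $v$ that is still unmatched given the previous exposures, and reveal its partner under $P$. Let $\calF_i$ be the $\sigma$-algebra generated by the first $i$ exposed pairs, and set $Z_i = \expected[X \mid \calF_i]$, so that $(Z_i)_{i=0}^{N}$ is a martingale with $Z_0 = \expected[X]$ and $Z_N = X$.

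Step 2 (Bounded increments via swap coupling). I claim $|Z_i - Z_{i-1}| \le c$ pointwise. Condition on $\calF_{i-1}$; let $v$ be the vertex to be matched at round $i$, and for each admissible partner $u$ let $Y(u) = \expected[X \mid \calF_{i-1}, v \leftrightarrow u]$. Then $Z_{i-1}$ equals the uniform average of $Y(U)$ over admissible partners, while $Z_i = Y(\text{actual partner})$, so it suffices to show $|Y(u_1) - Y(u_2)| \le c$ for any two admissible $u_1 \ne u_2$. Conditional on $v \leftrightarrow u_j$, the remaining matching is uniform over perfect matchings of the remaining copies. Couple these two conditional laws by the following bijection: given a completion in which $v \leftrightarrow u_1$ and $u_2 \leftrightarrow w$, exchange those two pairs for $v \leftrightarrow u_2$ and $u_1 \leftrightarrow w$, leaving all other pairs fixed. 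This exchange is precisely a single swap $P \sim P'$ of the form in the hypothesis, so $|X(P) - X(P')| \le c$; averaging through the coupling yields $|Y(u_1) - Y(u_2)| \le c$.

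Step 3 (Azuma-Hoeffding). With $N = dn/2$ martingale increments each bounded by $c$ in absolute value, Azuma-Hoeffding gives
\[
\Pr\!\left[|X - \expected[X]| > t\right] \le 2\exp\!\left(-\frac{t^2}{2Nc^2}\right) = 2\exp\!\left(-\frac{t^2}{dn\,c^2}\right),
\]
which is of the asserted form.

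The main obstacle is Step 2, since the underlying sample space of completions changes as we vary the partner of $v$; the key trick is the single-swap bijection, which is exactly the operation for which the Lipschitz hypothesis provides direct control. Steps 1 and 3 are boilerplate applications of vertex exposure and the Azuma-Hoeffding inequality.
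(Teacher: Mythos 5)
Your argument --- vertex-exposure Doob martingale on $\pnd$, bounded increments via the single-swap bijection coupling, then Azuma--Hoeffding --- is the standard proof of this kind of statement and each step is sound. Note, though, that the paper does not actually prove this lemma: it quotes it from Wormald's survey (Theorem 2.19), so you are supplying a proof rather than duplicating one.

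The one thing you should not have let slide is the final line: you correctly derive the tail bound $2\exp(-t^2/(dn\,c^2))$, but the lemma as printed claims $2\exp(-t^2/(dn\,c))$, and these differ unless $c=1$. Your $c^2$ is what Azuma--Hoeffding gives with $N=dn/2$ increments each bounded in absolute value by $c$, and it is the only dimensionally consistent choice, since the exponent must be invariant under rescaling $X$ and $c$ together; the $c$ in the paper's statement is a transcription typo (Wormald's Theorem 2.19 has $c^2$). The typo is harmless in the paper's application, where $c=2$ and all that is needed is that the deviation is $o(n)$ with high probability, but you should flag the discrepancy explicitly rather than assert that your bound ``is of the asserted form.''
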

Now fix $\theta$, and define $X$ as the minimum number of edge constraints $P_{ij} = 0$ in the SDP~\eqref{eq:lovasz-primal} violated by an otherwise feasible solution with $\kappa = \theta$.  This meets the Lipschitz condition with $c=2$.  By assumption $X=0$ with positive probability.  Lemma~\ref{lem:wormald-concentration} then implies that (say) $\expected[X] \le (1/2) \sqrt{n \log n}$, in which case $X < \sqrt{n \log n}$ with high probability.  

Let $S$ denote the set of endpoints of the violated edges.  Then there is an orthogonal representation $\{u_i\}$ of the subgraph induced by $V \setminus S$ and a unit vector $\zee$ such that $\innerprod{u_i}{\zee} = 1/\sqrt{\theta}$ and $\innerprod{u_i}{u_j} = 0$ if $(i,j) \in E$ and $i,j \notin S$.  Our goal is to ``fix'' $\{u_i\}$ on the violated edges, and if necessary on some additional vertices, to give an orthogonal representation $\{v_i\}$ for all of $G$.  

As in~\cite{achlioptas-moore-reg,luczak}, we inductively build a set of vertices $S = U_0, U_1, \ldots, U_T = U$ as follows.  Given $U_t$, let $U_{t+1} = U_t \cup \{i,j\}$ where $i,j \notin U_t$, $(i,j) \in E$, and $i$ and $j$ each have at least one neighbor in $U_t$.  We define $T$ as the step at which there is no such pair $i,j$ and this process ends.  Let $I$ denote $U$'s neighborhood, i.e., the set of vertices outside $U$ which have a neighbor in $U$.  Then $I$ is an independent set, since otherwise the process would have continued.  We make the following claim:

\begin{lemma}
With high probability, the subgraph induced by $U$ is 3-colorable.
\end{lemma}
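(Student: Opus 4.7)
The plan is to show that $G[U]$ is \emph{2-degenerate}: every induced subgraph contains a vertex of degree at most $2$. A 2-degenerate graph is 3-colorable by a greedy coloring in reverse order of vertex removal. The argument combines a deterministic edge count for the growth process with a standard sparsity estimate for small subsets in the configuration model.

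First I would observe that, by the construction of the process, each step adds two vertices $i,j$ joined by an edge, with each of $i$ and $j$ contributing at least one edge into $U_t$; these three edges are distinct and all lie in $G[U]$. Hence after $T$ steps one has $|U| = |S| + 2T$ and
\[
e(G[U]) \ge e(G[S]) + 3T \, .
\]
Next I would invoke a standard first-moment bound in the configuration model: there exist constants $\eps \in (0,1/2)$ and $c>0$, depending only on $d$, such that with high probability every vertex subset $W$ with $|W| \le cn$ satisfies $e(G[W]) < (1+\eps)|W|$. Indeed, the expected number of $k$-subsets with at least $(1+\eps)k$ induced edges is bounded by
\[
\binom{n}{k}\binom{\binom{k}{2}}{(1+\eps)k}\Big(\tfrac{d}{n}\Big)^{(1+\eps)k}(1+o(1)) = \big[A(d,\eps)\,(k/n)^{\eps}\big]^{k}
\]
for some constant $A$, and summing over $k \le cn$ yields $o(1)$.

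Combining the two ingredients, whenever $|U| \le cn$ one has $3T \le e(G[U]) < (1+\eps)(|S|+2T)$, which rearranges to $T \le \tfrac{1+\eps}{1-2\eps}|S|$. Since $|S| \le 2\sqrt{n\log n}$ by the preceding concentration argument, this yields $|U| = O(\sqrt{n\log n}) = o(n)$, well below $cn$, so the application of the sparsity estimate is self-consistent. The same estimate, applied to every subset $W \subseteq U$, shows that $G[W]$ has average degree strictly less than $2(1+\eps) < 3$ and hence a vertex of degree at most $2$; this is precisely 2-degeneracy of $G[U]$ and gives 3-colorability.

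The main subtlety will be tuning $\eps$: it must be strictly below $1/2$ both so that the recursion for $T$ closes with a linear bound in $|S|$ (requiring $3>2(1+\eps)$) and so that the average degree of every induced subgraph of $U$ lies below $3$, yielding a vertex of degree at most $2$. Any fixed $\eps<1/2$, together with the corresponding threshold $cn$ below which the first-moment bound still sums to $o(1)$, serves both purposes at once.
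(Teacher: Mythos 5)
Your proposal is correct and follows essentially the same argument as the paper's proof: count the three new edges added per step of the process to get a density lower bound on $G[U]$, invoke a union bound showing that all small induced subgraphs of $G_{n,d}$ have average degree strictly below $3$, conclude $|U|=O(|S|)=o(n)$ so the sparsity estimate applies, and then read off $2$-degeneracy (the paper phrases this as the absence of a $3$-core) and hence $3$-colorability. The only differences are cosmetic: you spell out the first-moment computation the paper calls a ``crude union bound,'' and you drop the $e(G[S])\geq |S|/2$ contribution to the edge count, which slightly worsens the constant in the bound $|U|=O(|S|)$ without affecting the conclusion.
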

\begin{proof}
For all $0 \le t \le T$ we have $|U_t| = 2t+|S|$.  Moreover, the subgraph induced by $U_t$ has at least $3t + |S|/2 = (3/2) |U_t| - |S|$ edges and thus average degree at least $3 - 2 |S|/|U_t|$.  On the other hand, a crude union bound shows that for any $d$ and any $\beta > 2$, there is an $\alpha > 0$ such that, with high probability, all induced subgraphs of $G$ containing $\alpha n$ or fewer vertices have average degree less than $\beta$.  Since $|S|=o(n)$ with high probability, this implies that $|U_t| \le (2+o(1)) |S|$ for all $t$, and in particular that $|U| = o(n)$.  

The same union bound then implies that with high probability the subgraph induced by $|U|$, and all its subgraphs, have average degree less than $3$.  But this means that this subgraph has no 3-core: that is, it has at least one vertex of degree less than 3, and so will the subgraph we get by deleting this vertex, and so on.  Working backwards, we can 3-color the entire subgraph by starting with the empty set and adding these vertices back in, since at least one of the three colors will always be available to them.
\end{proof}

To define our orthogonal representation, let $w$ be a unit vector such that $\innerprod{\zee}{w} = \innerprod{u_i}{w} = 0$ for all $i \notin S$; such a vector exists since $|S| \ge 2$.  Then define
\[
\zee' = \sqrt{\frac{\theta}{\theta+1}} \,\zee + \frac{1}{\sqrt{\theta+1}} \,w \, .
\]
Then $|\zee'|^2 = 1$, and $\innerprod{w}{\zee'} = \innerprod{u_i}{\zee'} = 1/\sqrt{\theta+1}$ for all $i \notin S$.  Moreover, there exist three mutually orthogonal unit vectors $y_1, y_2, y_3$ such that $\innerprod{y_j}{\zee'} = 1/\sqrt{\theta+1}$ and $\innerprod{y_j}{w} = 0$ for all $j \in \{1,2,3\}$.  This follows from the fact that the following matrix is PSD whenever $\theta \ge 3$, in which case it can be realized as the Gram matrix of $\{y_1,y_2,y_3,w,\zee'\}$:
\[
\begin{pmatrix}
1 & 0 & 0 & 0 & \frac{1}{\sqrt{\theta+1}} \\
0 & 1 & 0 & 0 & \frac{1}{\sqrt{\theta+1}} \\
0 & 0 & 1 & 0 & \frac{1}{\sqrt{\theta+1}} \\
0 & 0 & 0 & 1 & \frac{1}{\sqrt{\theta+1}} \\
\frac{1}{\sqrt{\theta+1}} & \frac{1}{\sqrt{\theta+1}} & \frac{1}{\sqrt{\theta+1}} & \frac{1}{\sqrt{\theta+1}} & 1
\end{pmatrix} \, . 
\]
Finally, let $\sigma(i) \in \{1,2,3\}$ be a proper 3-coloring of the subgraph induced by $U$.  Then the following is an orthogonal representation of $G$,
\[
v_i = \begin{cases} 
u_i & i \in V \setminus (U \cup I) \\
w & i \in I \\
y_{\sigma(i)} & i \in U \, ,
\end{cases}
\]
and $\innerprod{v_i}{\zee'} = 1/\sqrt{\theta+1}$ for all $i$.  This gives a feasible solution to the SDP~\eqref{eq:lovasz-primal} with $\kappa = \theta+1$, implying that $\vartheta(\overline{G}) \le \theta+1$.

\end{appendix}

\end{document}